\documentclass[10pt,a4paper,reqno]{amsart}
\usepackage[latin1]{inputenc}
\usepackage{amsmath,enumerate,amsthm,graphicx,color,verbatim,mathtools}
\usepackage{amsfonts,amscd}
\usepackage{latexsym}
\usepackage{amssymb}
\numberwithin{equation}{section}

\newtheorem{thm}{Theorem}

\newtheorem{remark}{Remark}

\newtheorem{lemma}[thm]{Lemma}

\newtheorem{prop}[thm]{Proposition}

\newcommand{\Z}{\mathbb{Z}}
\newcommand{\D}{\mathbb{D}}
\newcommand{\R}{\mathbb{R}}

\newcommand{\s}{\text{\rm{s}}}

\newcommand{\ess}{\text{\rm{ess}}}
\newcommand{\ac}{\text{\rm{ac}}}
\newcommand{\lj}{{\mathbf l}}
\DeclareMathOperator{\Var}{Var}
\DeclareMathOperator{\supp}{supp}

\title[Generalized Pr\"ufer variables for Jacobi and CMV matrices]{Generalized Pr\"ufer variables for perturbations of Jacobi and CMV matrices}

\date\today

\author{Milivoje Lukic}
\address{Department of Mathematics, University of Toronto, Bahen Centre, 40 St. George St., Toronto, Ontario, CANADA M5S 2E4 and Department of Mathematics, Rice University, Houston TX 77005, U.S.A.}

\email{mlukic@math.toronto.edu}

\thanks{M.L.\ was partially supported by NSF Grant DMS--1301582. M.L. would also like to thank the Isaac Newton Institute for Mathematical Sciences, Cambridge, for support and hospitality during the programme ``Periodic and Ergodic Spectral Problems" where part of this work was undertaken.}

\author{Darren C. Ong}
\address{Department of Mathematics, University of Oklahoma, Norman, OK 73019-3103, U.S.A.}
\email{darrenong@math.ou.edu}

\thanks{D.O.\ was partially supported by NSF grant DMS--1067988.}

\subjclass[2010]{47B36,42C05,39A70}

\begin{document}
\begin{abstract}
Pr\"ufer variables are a standard tool in spectral theory, developed originally for perturbations of the free Schr\"odinger operator. They were generalized by Kiselev, Remling, and Simon to perturbations of an arbitrary Schr\"odinger operator. We adapt these generalized Prufer variables to the setting of Jacobi and Szeg\H o recursions. We present an application to random $L^2$ perturbations of Jacobi and CMV matrices, and an application to decaying oscillatory perturbations of periodic Jacobi and CMV matrices.
\end{abstract}

\maketitle

\begin{section}{Introduction}
Let $H_0$ be a second-order differential or difference operator and let $V$ be a perturbation. To analyze spectral properties of the perturbed operator $H_0+V$, it is often useful to compare them to  spectral properties of the, usually simpler, unperturbed operator $H_0$. The comparison can be done at the level of eigensolutions, by which we always mean generalized eigensolutions, i.e.\ formal eigensolutions of the differential or difference operator, not necessarily in the Hilbert space. The goal is then to compare eigensolutions $\varphi$ of $H_0$,
\[
H_0 \varphi = E \varphi,
\]
to eigensolutions $u$ of the perturbed operator $H_0+V$,
\[
(H_0+V) u = E u.
\]
One strategy is to define Pr\"ufer variables $R, \theta$ in a way that quantifies this comparison, so that $R,\theta$ obey a first-order differential (or difference) equation. This strategy has been implemented for several classes of operators, starting with the work of Pr\"ufer \cite{Prufer} for Schr\"odinger operators, in the case where $H_0 = - \Delta$ is the free Laplacian and $\varphi(x) = e^{ikx}$. For perturbations of the free Jacobi matrix, the analogous variables arose gradually in the work of several authors, first for discrete Schr\"odinger operators \cite{Eggarter,GredeskulPastur,PasturFigotin}, later also for more general Jacobi matrix perturbations \cite{Breuer07,Breuer10,BreuerLastSimon10,KaluzhnyLast07,Lukic-OP}. Pr\"ufer variables for orthogonal polynomials on the unit circle, for perturbations of the free case, were first used in \cite{Nikishin}; see also \cite{SimonOPUC2}. 

Pr\"ufer variables are a very important tool for analyzing properties of eigensolutions and, since properties of eigensolutions imply spectral properties of $H_0+V$, they have been used extensively in spectral theory, especially in the study of decaying perturbations; see, e.g., \cite{KiselevLastSimon98,SimonOPUC2,Lukic-infinite}.

Furthermore, \cite{KRS} developed \em generalized \em Pr\"ufer variables, which extend this approach to the case where $H_0$ is an arbitrary, continuous or discrete, Schr\"odinger operator (instead of just the free Laplacian). Those variables are well suited to the study of decaying perturbations of $H_0$, especially in cases where eigensolutions of $H_0$ have good properties, e.g. for periodic $H_0$; see, e.g., \cite{KRS,Lukic--Ong}.

In this paper, we adapt the idea of generalized Pr\"ufer variables to two other difference equations, the \em Jacobi \em and \em Szeg\H o \em recursions, which correspond to Jacobi and CMV matrices and, equivalently, to orthogonal polynomials on the real line and orthogonal polynomials on the unit circle. We will explain the setup in the introduction, postponing the details to later sections.

\begin{subsection}{The Jacobi recursion}
We consider a Jacobi matrix $\mathcal J$ with coefficients $a_n > 0$, $b_n \in \mathbb{R}$, 
\begin{equation}
\mathcal J=\left(
\begin{array}{ccccc}
b_1 & a_1 & 0  &0&\cdots \\
a_1 &b_2 &a_2&0&\cdots\\
0  &a_2&b_3&a_3&\cdots \\
0  &  0&a_3&b_4&\cdots\\
\cdots&\cdots&\cdots&\cdots&\cdots
\end{array}
\right).
\end{equation}
The Jacobi matrix is viewed as an operator on $\ell^2(\mathbb N_0)$.

We consider also its perturbation, a Jacobi matrix $\tilde {\mathcal J}$ with coefficients $a_n + a_n' > 0$, $b_n + b_n' \in \mathbb{R}$. Consider, for  $E\in \R$,  
a solution $\varphi$ of the eigenvalue equation $\mathcal J\varphi=E\varphi$, that is,
\begin{equation}\label{varphi}
a_{n+1}\varphi(n+1)+b_{n+1}\varphi(n)+a_n \varphi(n-1) = E\varphi(n),
\end{equation} 
and an eigensolution $u$ for $\tilde {\mathcal J}$,
\begin{equation}\label{u}
(a_{n+1}+ a_{n+1}')u(n+1)+(b_{n+1}+ b_{n+1}')u(n)+(a_n+ a_n') u(n-1) = Eu(n).
\end{equation} 
At this point there will be an assymetry in our setup. We assume that $\varphi$ is linearly independent with its complex conjugate $\bar\varphi$ (we refer to this as a ``complex" solution from now on). On the other hand, we assume that $u$ is a real-valued eigensolution.

We can now define the Pr\"ufer variable $Z(n)$ by
\begin{align}
\label{rho1}
\begin{pmatrix}
(a_n+ a_n')u(n)\\
u(n-1)
\end{pmatrix}=&\frac{1}{2i}\left(
Z(n)\begin{pmatrix}
a_n\varphi(n)\\
\varphi(n-1)
\end{pmatrix}
-\overline {Z(n)}
\begin{pmatrix}
a_n\overline{\varphi(n)}\\
\overline{\varphi(n-1)}
\end{pmatrix}
\right)
\\
=&
\label{rho2}\mathrm{Im}\left[
Z(n)
\begin{pmatrix}
a_n\varphi(n)\\
\varphi(n-1)
\end{pmatrix}
\right].
\end{align}
By linear independence of $\varphi$ and $\bar\varphi$ and reality of $u$, \eqref{rho1} uniquely determines $Z(n)$. We also define the Pr\"ufer amplitude $R(n) > 0$ and Pr\"ufer phase $\eta(n) \in \mathbb{R}$ by
\begin{equation}\label{ZReta}
Z(n) = R(n) e^{i\eta(n)}.
\end{equation}
The second-order linear equation \eqref{u} reduces to a first-order nonlinear recursion relation for $Z(n)$, which we derive in Section~\ref{JacobiPrufer}.

We point out again that this approach was introduced by \cite{KRS} for the case $a_n =1$, $a_n'=0$.

\end{subsection}
\begin{subsection}{The Szeg\H o recursion}
For $z\in \partial\D$ and $\alpha \in \D$, introduce transfer matrices
\begin{equation}\label{Aalphaz}
A(\alpha,z)=\frac{1}{\sqrt{1-\lvert \alpha\rvert^2}}
\begin{pmatrix}
z &-\overline{\alpha}\\
-\alpha z& 1
\end{pmatrix}.
\end{equation}
For a sequence of Verblunsky coefficients $\{\alpha_n\mid n\in \mathbb N_0\}$, consider the Szeg\H o recursion given by
\begin{equation}\label{OPUC.tmatrix}
v(n+1)=
z^{-1/2}A(\alpha_n,z)
v(n)
\end{equation}
where $v(n) \in \mathbb C^2$. Szeg\H o recursion is commonly stated without the $z^{-1/2}$, but the factor is added here out of convenience. The choice of square root will be irrelevant in what follows, as long as it is consistent between formulas.

Szeg\H o recursion generates orthogonal polynomials on the unit circle: if $\alpha_n$ are the Verblunsky coefficients corresponding to a measure $\mu$ on the unit circle and $v(0) = \begin{pmatrix} 1 \\ 1 \end{pmatrix}$, the corresponding orthogonal polynomials $\varphi_n(z)$ obey
\[
v(n) = \begin{pmatrix} z^{-n/2} \varphi_n(z) \\ z^{n/2} \overline{\varphi_n(1/\bar z)} \end{pmatrix}.
\]
Thus, Szeg\H o recursion plays the role for orthogonal polynomials on the unit circle that Jacobi recursion plays for orthogonal polynomials on the real line. The corresponding matrix representation is given by the CMV matrix, 
\begin{equation}
\mathcal C=\left(
\begin{array}{cccccc}
\overline{\alpha_0}&\overline{ \alpha_1}\rho_0&\rho_1\rho_0&0&0&\ldots\\
\rho_0 &- \overline{\alpha_1}\alpha_0&-\rho_1\alpha_0&0&0&\ldots\\
0&\overline{\alpha_2}\rho_1&-\overline{\alpha_2}\alpha_1&\overline{\alpha_3}\rho_2&\rho_3\rho_2&\ldots\\
0&\rho_2\rho_1&-\rho_2\alpha_1&-\overline{\alpha_3}\alpha_2&-\rho_3\alpha_2&\ldots\\
0&0&0&\overline{\alpha_4}\rho_3&-\overline{\alpha_4}\alpha_3&\ldots\\
\ldots& \ldots&\ldots&\ldots&\ldots&\ldots\\
\end{array}
\right),
\end{equation}
a unitary operator from $\ell^2(\mathbb N)$ to $\ell^2(\mathbb N)$; here we denote $\rho_n = \sqrt{1-\vert \alpha_n\vert^2}$.

The CMV operator is of central importance in the theory of orthogonal polynomials on the unit circle. In particular, it can be understood as the unitary analogue of the self-adjoint Jacobi operator. For more details on the Szeg\H o recursion and the CMV operator, please refer to \cite{SimonOPUC1,SimonOPUC2,Simon2}.

Consider now a perturbation $\alpha' = \{ \alpha_n'\}_{n=0}^\infty$ such that $\alpha_n+\alpha'_n\in \D$ and a solution $u$ of the perturbed Szeg\H o recursion
\begin{equation}\label{OPUC.perturbed}
u(n+1)=z^{-1/2}A(\alpha_n + \alpha'_n,z) u(n)
\end{equation}
with an initial condition of the form
\begin{equation}\label{u0}
u(0)=\begin{pmatrix}
\kappa \\
\bar \kappa
\end{pmatrix}.
\end{equation}

Let us define an antilinear operator $C$ by
\begin{equation}
C\begin{pmatrix}
w_1\\
w_2
\end{pmatrix}
=
\begin{pmatrix}
\overline {w_2}\\
\overline {w_1}
\end{pmatrix}
\end{equation}
and let us write $v^*(n)= Cv(n)$. We will show that
\begin{prop}
There is a unique $Z(n) \in \mathbb{C}$ such that
\begin{equation}\label{uvequation}
u(n)=Z(n) v(n)+\overline{Z(n)} v^*(n).
\end{equation}
\end{prop}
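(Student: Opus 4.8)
The plan is to exploit the antilinear symmetry $C$. The first step is to record that $C^2=\mathrm{Id}$ and that, for each fixed $z\in\partial\D$ and each $\beta\in\D$, the one-step transfer map $w\mapsto z^{-1/2}A(\beta,z)w$ commutes with $C$. I would verify this directly from \eqref{Aalphaz}: writing $C$ as the composition of entrywise complex conjugation with the coordinate swap $(w_1,w_2)\mapsto(w_2,w_1)$, the commutation reduces to the matrix identity $\left(\begin{smallmatrix}0&1\\1&0\end{smallmatrix}\right)A(\beta,z)\left(\begin{smallmatrix}0&1\\1&0\end{smallmatrix}\right)=z\,\overline{A(\beta,z)}$, which holds because $|z|=1$ (so $z\bar z=1$ and $\overline{z^{-1/2}}=z^{1/2}$). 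Applying $C$ to \eqref{OPUC.tmatrix} then shows that $v^*(n)=Cv(n)$ solves the same \emph{unperturbed} Szeg\H o recursion as $v(n)$; and applying $C$ to \eqref{OPUC.perturbed}, together with $Cu(0)=u(0)$ (immediate from \eqref{u0}), gives by induction on $n$ that $Cu(n)=u(n)$ for all $n$.

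With these symmetries in hand, fix $n$ and write $v_1,v_2$ for the components of $v(n)$, so that $v^*(n)$ has components $\overline{v_2},\overline{v_1}$, and $u_1,u_2$ for the components of $u(n)$. Then \eqref{uvequation} becomes the pair $u_1=Zv_1+\overline{Z}\,\overline{v_2}$ and $u_2=Zv_2+\overline{Z}\,\overline{v_1}$. Since $Cu(n)=u(n)$ says precisely $u_2=\overline{u_1}$, the second equation is the complex conjugate of the first, so \eqref{uvequation} is equivalent to the single scalar equation $u_1=Zv_1+\overline{Z}\,\overline{v_2}$ in the complex unknown $Z$. Writing $Z=x+iy$ with $x,y\in\R$, this is a real-linear $2\times2$ system in $(x,y)$ with coefficient vectors $v_1+\overline{v_2}$ and $i(v_1-\overline{v_2})$ in $\C\cong\R^2$; it has a unique solution iff those two vectors are $\R$-linearly independent, and a one-line computation identifies the relevant determinant as $\im\big[\,\overline{(v_1+\overline{v_2})}\,i(v_1-\overline{v_2})\,\big]=|v_1|^2-|v_2|^2$.

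It then remains to see that $|v_1(n)|^2-|v_2(n)|^2$ never vanishes. This quantity is independent of $n$: with $J=\left(\begin{smallmatrix}1&0\\0&-1\end{smallmatrix}\right)$ one checks from \eqref{Aalphaz} that $A(\beta,z)^\ast J A(\beta,z)=J$ for $z\in\partial\D$, and since $|z^{-1/2}|=1$ this yields $\langle v(n+1),Jv(n+1)\rangle=\langle v(n),Jv(n)\rangle$, i.e.\ $|v_1(n)|^2-|v_2(n)|^2$ is constant. Hence the solvability condition holds for every $n$ as soon as it holds for $n=0$, and at $n=0$ it is exactly the statement that $v(0)$ and $v^*(0)=Cv(0)$ are linearly independent — the Szeg\H o analogue of the hypothesis that $\varphi$ be a ``complex'' solution in the Jacobi setting, which we assume of $v$. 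Under that assumption the scalar equation has a unique solution $Z=Z(n)$ for each $n$, proving the proposition.

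The only genuinely delicate point I anticipate is the bookkeeping in the first step: keeping straight the interaction of $C$ with the coordinate swap, entrywise conjugation, and the chosen branch of $z^{-1/2}$, and invoking $|z|=1$ at the right places. Once the intertwining identity for $C$ and the $J$-unitarity of $A(\beta,z)$ are established, the rest is elementary linear algebra over $\R$.
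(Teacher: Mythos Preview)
Your proof is correct and follows essentially the same approach as the paper: both establish that $C$ commutes with the one-step transfer map (so $v^*$ solves the unperturbed recursion and $Cu(n)=u(n)$), assume $v(0),v^*(0)$ are linearly independent, and use that this independence propagates to all $n$. The only difference is organizational: the paper writes $u(n)=Z(n)v(n)+\mathfrak s(n)v^*(n)$ with two complex unknowns and then applies $C$ to force $\mathfrak s(n)=\overline{Z(n)}$, whereas you first use $Cu(n)=u(n)$ to reduce to a single scalar equation and then analyze it as a real $2\times2$ system, identifying the determinant explicitly as the conserved quantity $|v_1|^2-|v_2|^2$ (which is exactly the Wronskian $\omega$ in the paper's \eqref{OPUC.Wronskian}).
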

The quantity $Z(n)$ is taken as our Pr\"ufer variable in this setting; Pr\"ufer amplitude and Pr\"ufer phase are then defined by \eqref{ZReta}, analogously to the Jacobi case.

Our setup here is different from that for Jacobi matrices: we consider a first-order recursion given by the $2\times 2$ matrices \eqref{Aalphaz} instead of a second-order eigenvector equation. The distinction is not trivial since, for CMV matrices, generalized eigenfunctions are not generated by the Szeg\H o recursion, but rather by the Gesztesy--Zinchenko~\cite{GZ} recursion. However, spectral properties can be characterized directly in terms of the transfer matrices associated with Szeg\H o recursion, which justifies our setup. Moreover, \cite{DFLY} have noted a simple relation between the two recursions which will also allow us to link our Pr\"ufer variables to asymptotics of eigenfunctions.

The details of the preceding discussion and the first-order recursion relation obeyed by the Pr\"ufer variable are given in Section~\ref{PruferSzego}.
\end{subsection}

\begin{subsection}{Decaying random perturbations} The effect of random decaying perturbations on the spectrum of a Schr\"odinger operator has been studied in several papers \cite{Simon82,DelyonSimonSouillard85,Delyon85,KotaniUshiroya88,BreuerLast07,KaluzhnyLast07,KaluzhnyLast11}. In particular, for random $L^2$ discrete Schr\"odinger operators, Kiselev--Last--Simon \cite{KiselevLastSimon98} presented a simple proof that spectrum is almost surely purely absolutely continuous on $(-2,2)$.

Using the generalized Pr\"ufer variables just introduced, we can extend their argument to random decaying perturbations with a sufficiently nice background matrix.

To state the result, we define transfer matrices as
\[
T_N^E = \prod_{n=N}^1 \begin{pmatrix} \frac{E-b_{n+1}}{a_n}  &  - a_n \\ \frac 1{a_n} & 0 \end{pmatrix}
\]
since this is the definition that best matches our placement of $a_n$ in this work (see, e.g.,\eqref{rho2}). Several other conventions exist in the literature, differing in the placement of $a_n$. For instance, the choice made in \cite{DamanikKillipSimon10} corresponds to
\[
\begin{pmatrix} a_{N+1}^{-1} & 0 \\ 0 & a_N \end{pmatrix} T_N^E \begin{pmatrix} a_{1} & 0 \\ 0 & a_0^{-1} \end{pmatrix}.
\]
Since our result concerns bounded Jacobi matrices, this convention does not make a difference in what follows (e.g. in \eqref{unifbddsols}).

\begin{thm}\label{TrandomJacobi}
Let $\mathcal{J}$ be a bounded Jacobi matrix and $(u,v)$ an interval such that for every $\epsilon > 0$,
\begin{equation}\label{unifbddsols}
\sup_{E \in (u+\epsilon, v- \epsilon)} \sup_{N\in \mathbb{N}} \lVert T_N^E \rVert < \infty.
\end{equation}
Let $a'_n$, $b'_n$ be sequences of real-valued independent random variables such that the following hold:
\begin{equation}
\mathbb{E} (a'_n) = \mathbb{E}(b'_n) = 0,
\end{equation}
\begin{equation}\label{L2randomcondition}
\sum_n \mathbb{E} ({a'_n}^2 + {b'_n}^2) < \infty,
\end{equation}
and almost surely, for all $n$, $a_n + a_n' > 0$. Then, almost surely, $\tilde{\mathcal{J}}$ has purely absolutely continuous spectrum on $(u,v)$, and $(u,v)$ is in the essential support of the absolutely continuous spectrum.
\end{thm}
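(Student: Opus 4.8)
The plan is to adapt the Kiselev--Last--Simon argument for random $L^2$ discrete Schrödinger operators to this setting using the generalized Prüfer variables. The strategy rests on the equivalence (via subordinacy theory and the results of e.g.\ Last--Simon) between the claimed spectral conclusion and a statement about boundedness of transfer matrices: it suffices to show that for Lebesgue-a.e.\ $E \in (u,v)$, almost surely all solutions of $\tilde{\mathcal J} u = E u$ are bounded, together with a suitable lower bound showing solutions do not decay (so that $(u,v)$ lies in the essential support of the a.c.\ spectrum). Via the Prüfer variable $Z(n) = R(n) e^{i\eta(n)}$ introduced in \eqref{rho1}--\eqref{ZReta}, boundedness of $u$ is, under the hypothesis \eqref{unifbddsols} which guarantees $\varphi$ and $\bar\varphi$ are bounded and bounded away from degeneracy, equivalent to boundedness of $R(n)$, i.e.\ to control of $\log R(N) = \sum_{n} \bigl( \log R(n+1) - \log R(n) \bigr)$.

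First I would record the first-order recursion for $Z(n)$ derived in Section~\ref{JacobiPrufer}, and extract from it the increment $\log R(n+1) - \log R(n)$ as a function of the perturbation parameters $a_n', b_n'$ and the phase $\eta(n)$. The key structural feature, exactly as in \cite{KiselevLastSimon98}, is that this increment has an expansion of the form
\begin{equation*}
\log R(n+1) - \log R(n) = F_1(\eta(n)) \, c_n + F_2(\eta(n)) \, c_n^2 + O(c_n^3),
\end{equation*}
where $c_n$ stands schematically for the pair $(a_n', b_n')$, the functions $F_i$ are bounded and depend on $E$ through $\varphi$, and $F_1$ is \emph{linear} in $c_n$ with coefficients that are (for fixed $\eta(n)$) deterministic bounded functions of $E$. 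Summing over $n$, the series $\sum_n \log R(n+1)-\log R(n)$ splits into a ``linear'' part $\sum_n F_1(\eta(n)) c_n$ and a ``quadratic'' part $\sum_n F_2(\eta(n)) c_n^2 + \sum_n O(c_n^3)$. The quadratic and higher parts converge absolutely and uniformly in $E$ on compact subintervals, because $\mathbb E(\sum_n c_n^2) < \infty$ by \eqref{L2randomcondition} implies $\sum_n c_n^2 < \infty$ almost surely, and boundedness of the Jacobi matrix plus \eqref{unifbddsols} controls the $F_i$. The linear part is the heart of the matter: conditioning on the $\sigma$-algebra generated by $c_0, \dots, c_{n-1}$, the phase $\eta(n)$ is measurable while $c_n$ is independent with mean zero, so $\mathbb E\bigl( F_1(\eta(n)) c_n \mid \mathcal F_{n-1}\bigr) = 0$; thus the partial sums of the linear part form a martingale whose increments have $\ell^2$-summable conditional variances (again by \eqref{L2randomcondition} and boundedness of $F_1$), hence the martingale converges almost surely by the martingale convergence theorem. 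This gives, for each fixed $E$, that $R(N)$ converges almost surely; a Fubini argument then yields that almost surely, $R(N)$ converges for Lebesgue-a.e.\ $E \in (u,v)$, hence $u$ is bounded — and since $R(N)$ converges to a \emph{positive} limit, $u$ is also bounded below, so no solution is subordinate, giving the essential-support statement.

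The main obstacle I anticipate is handling the $E$-dependence carefully enough to push the almost-sure statement (which comes for free at fixed $E$) to an almost-sure statement valid for a.e.\ $E$ simultaneously. The clean way is the Fubini interchange: show $\mathbb E \int_{u+\epsilon}^{v-\epsilon} \mathbf{1}[R(N) \text{ diverges}] \, dE = 0$, which requires a dominated-convergence or uniform-integrability input — and this is where the uniform bound \eqref{unifbddsols} on transfer matrices, together with \eqref{L2randomcondition}, does the work, bounding $\mathbb E (\sup_N \log^+ R(N))$ or at least controlling the relevant tail on compact energy intervals. A secondary technical point is verifying that the ``error'' terms in the expansion of $\log R(n+1) - \log R(n)$ are genuinely $O(c_n^3)$ with constants uniform in $n$ and in $E$ on compacts; this follows from analyticity of the Prüfer recursion in the perturbation parameters near $c_n = 0$ together with the a.s.\ bound $c_n \to 0$, but needs to be stated with care. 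Finally, the passage from ``boundedness of all solutions for a.e.\ $E$ in $(u,v)$'' to ``purely a.c.\ spectrum on $(u,v)$ and $(u,v)$ in the essential support of the a.c.\ part'' invokes the standard Gilbert--Pearson/Last--Simon subordinacy machinery, which applies verbatim for bounded Jacobi matrices; for the CMV analogue one would instead cite the corresponding results for the Szeg\H o recursion and the Gesztesy--Zinchenko reformulation mentioned in the introduction.
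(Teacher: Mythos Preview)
Your approach is correct but takes a genuinely different route from the paper's. The paper does not track $\log R(n)$ via a martingale argument; instead it works with the fourth moment $R(n)^4$ directly. From the Pr\"ufer recursion it writes
\[
R(n+1)^4 = R(n)^4 \bigl( 1 + (\text{terms linear in }a_n',b_{n+1}') + O(|a_n'|^2 + |b_{n+1}'|^2) \bigr),
\]
observes that the linear terms have zero expectation because $R(n),\eta(n)$ are measurable with respect to $(a_1',\dots,a_{n-1}',b_1',\dots,b_n')$ while $a_n',b_{n+1}'$ are independent with mean zero, and thereby obtains
\[
\mathbb{E}\bigl(R(n+1)^4\bigr) \le \bigl(1 + C\Phi(E)\,\mathbb{E}(|a_n'|^2+|b_{n+1}'|^2)\bigr)\,\mathbb{E}\bigl(R(n)^4\bigr).
\]
Iterating, integrating over $E\in(u+\epsilon,v-\epsilon)$, and applying Fatou's lemma gives, almost surely,
\[
\liminf_{n\to\infty}\int_{u+\epsilon}^{v-\epsilon} R(n)^4\,dE < \infty,
\]
and two choices of initial condition then yield the Last--Simon criterion $\liminf_n \int \lVert \tilde T_n^E\rVert^4\,dE < \infty$, from which the spectral conclusion follows at once.

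The trade-offs: the paper's moment-plus-Fatou argument is shorter and avoids martingale convergence entirely, and it plugs directly into the Last--Simon $L^4$ criterion, which simultaneously delivers both purely a.c.\ spectrum and the essential-support statement without a separate subordinacy discussion. Your martingale route yields the stronger pointwise conclusion that $R(n)$ actually converges (not merely that a $\liminf$ of an integral is finite), at the cost of invoking the martingale convergence theorem and then assembling the spectral conclusion from Gilbert--Pearson/Last--Simon subordinacy by hand. Both are valid; the paper's version is the more economical of the two.
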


\begin{remark}
\eqref{unifbddsols} implies that $\mathcal{J}$ has purely a.c.\ spectrum on $(u,v)$ so, by Dombrowski~\cite{Dombrowski78}, $\inf_n a_n > 0$. Then, the conditions of Theorem \ref{TrandomJacobi} imply that $\tilde J$ is almost surely a bounded Jacobi matrix with $\inf_n (a_n + a_n') > 0$, by $\mathbb{P} (\lvert a_n' \rvert \ge M) \le \mathbb{E}({a_n'}^2)) / M^2$ and the Borel--Cantelli lemma.
\end{remark}

The conditions of this theorem are known to hold in a variety of cases, for instance, if the unperturbed Jacobi matrix $\mathcal{J}$ is periodic. More generally, by results of \cite{PeherstorferSteinbauer00}, they also hold if $\mathcal{J}$ obeys a $q$-bounded variation condition for some $q$, i.e.~if
\begin{equation}\label{qvariation}
\sum_{n} \lvert a_{n+q} - a_n \rvert + \lvert b_{n+q} - b_n \rvert < \infty.
\end{equation}
In those cases, it is known that the essential spectrum of $\mathcal{J}$ consists of finitely many bands, and that \eqref{unifbddsols} holds on each band.

In this way, our result generalizes a result of Kaluzhny--Last \cite{KaluzhnyLast07}, who prove Theorem~\ref{TrandomJacobi} in the case that \eqref{qvariation} holds for $q=1$. Their argument is more elaborate, working directly with transfer matrices and using the bounded variation condition. Our argument applies also in other cases where the conditions of Theorem \ref{TrandomJacobi} are known to hold, e.g.\ for the oscillatory decaying Jacobi matrices studied in \cite{Lukic-OP}.

We prove also a version for the CMV operator. For this setting, let us for the sake of notational convenience define
\[ T(N,z)=\prod_{n=N}^1 A(\alpha_n,z).\]

\begin{thm}\label{TrandomCMV}
Let $\mathcal{C}$ be a CMV matrix corresponding to a sequence of Verblunsky coefficients $\alpha$, and $(u,v)$ an arc on the unit circle (parametrized as $[0,2\pi)$ such that for every $\epsilon > 0$,
\begin{equation}\label{unifbddsolsCMV}
\sup_{E \in (u+\epsilon, v- \epsilon)} \sup_{N\in \mathbb{N}} \lVert T(N,z) \rVert < \infty.
\end{equation}
Let $\alpha'_n$ be a sequence of $\mathbb D$-valued independent random variables such that the following hold:
\begin{equation}
\mathbb{E} (\alpha'_n) = 0,
\end{equation}
\begin{equation}\label{L2randomconditionCMV}
\sum_n \mathbb{E} ( {\lvert \alpha'_n\rvert}^2) < \infty,
\end{equation}
and such that for all $n$,
\begin{equation}
 \lvert \alpha_n+\alpha_n' \rvert < 1.
\end{equation}
Then, almost surely, the CMV operator $\tilde{\mathcal{C}}$ corresponding to Verblunsky coefficients $\alpha_n+\alpha'_n$ has purely absolutely continuous spectrum on $(u,v)$, and $(u,v)$ is in the essential support of the absolutely continuous spectrum.
\end{thm}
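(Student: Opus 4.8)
The plan is to mirror, in the unitary/CMV setting, the proof strategy used for Theorem~\ref{TrandomJacobi}, replacing the Jacobi Pr\"ufer variables by the Szeg\H o Pr\"ufer variable $Z(n)$ introduced above. First I would fix $z$ in the arc $(u+\epsilon,v-\epsilon)$ and take $v(n)$ to be the transfer-matrix evolution of Szeg\H o recursion \eqref{OPUC.tmatrix} for the background Verblunsky coefficients $\alpha$; by \eqref{unifbddsolsCMV}, $v(n)$ and $v^*(n)$ are uniformly bounded above and below in norm, uniformly in $z$ on the compact sub-arc. Then for the perturbed solution $u(n)$ of \eqref{OPUC.perturbed} with an initial condition of the form \eqref{u0}, the Proposition gives the decomposition $u(n)=Z(n)v(n)+\overline{Z(n)}v^*(n)$, and the first-order recursion for $Z(n)$ derived in Section~\ref{PruferSzego} takes the schematic form $Z(n+1)=Z(n)+\alpha_n' F_n(z) Z(n) + \alpha_n' G_n(z)\overline{Z(n)}$ with $F_n,G_n$ bounded (uniformly in $n$ and in $z$ on the sub-arc), because the increment is linear in the perturbation $\alpha_n'$ and the coefficients involve only $v(n),v^*(n)$ and the background data $\alpha_n$.

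Next I would examine $\log R(n)^2 = \log |Z(n)|^2$. Writing the recursion multiplicatively, $|Z(n+1)|^2 = |Z(n)|^2\bigl(1 + \alpha_n' X_n + \overline{\alpha_n'} \overline{X_n} + |\alpha_n'|^2 Y_n\bigr)$ for bounded random quantities $X_n, Y_n$ depending on $Z(n)/\overline{Z(n)}$ (i.e.\ on the Pr\"ufer phase) and on $n, z$. Taking logarithms and summing, $\log R(N)^2 - \log R(0)^2 = \sum_{n<N}\bigl(\alpha_n' X_n + \overline{\alpha_n'}\overline{X_n}\bigr) + \sum_{n<N} O(|\alpha_n'|^2)$. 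The second sum converges absolutely a.s.\ by \eqref{L2randomconditionCMV} and Borel--Cantelli (as in the Remark). For the first sum, the key point is that $X_n$ depends on $\alpha_0',\dots,\alpha_{n-1}'$ but not on $\alpha_n'$, so conditioning on $\mathcal F_{n-1}$ and using $\mathbb{E}(\alpha_n')=0$ shows the partial sums form an $L^2$-bounded martingale; by the martingale convergence theorem it converges a.s. Hence $\log R(n)$ converges a.s., so $R(n)$ is a.s.\ bounded above and below, for each fixed $z$ in the sub-arc.

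To upgrade from ``each fixed $z$'' to ``a.e.\ $z$ almost surely,'' I would integrate over the arc and use Fubini: $\mathbb{E}\int_{u+\epsilon}^{v-\epsilon}\bigl(\sup_n\text{ or }\limsup_n\ \text{something}\bigr)\,d\theta<\infty$, which is where one typically works with $\mathbb{E}\log R(n)^2$ or $\mathbb{E}(R(n)^2)$ rather than almost-sure statements. Concretely, one shows $\sup_n \mathbb{E}(R(n)^{\pm 2})$ is bounded uniformly in $n$ and in $z$ on the sub-arc (again using $\mathbb{E}(\alpha_n')=0$ to kill the first-order term in each step, so that $\mathbb{E}(R(n+1)^2\mid\mathcal F_{n-1}) \le R(n)^2(1+C\,\mathbb{E}(|\alpha_n'|^2))$ and \eqref{L2randomconditionCMV} gives a uniform bound via $\prod(1+c_n)<\infty$). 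Then Fubini gives that, almost surely, $\int |u(n,\theta)|^2\,d\theta$ and $\int |u(n,\theta)|^{-2}$-type quantities are bounded along a subsequence, so for Lebesgue-a.e.\ $\theta$ the solution $u(n,e^{i\theta})$ is bounded and bounded below infinitely often, i.e.\ neither decaying nor growing.

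Finally I would convert this boundedness of transfer-matrix solutions into the spectral conclusion. Boundedness of $u(n)$ for a.e.\ $\theta$ in $(u,v)$, together with the relation between the Szeg\H o transfer matrices and the Gesztesy--Zinchenko recursion noted via \cite{DFLY} in the introduction, implies via the standard subordinacy/Last--Simon-type theory for CMV matrices (Gilbert--Pearson--Jitomirskaya--Last, in the OPUC form of \cite{SimonOPUC2}) that $\tilde{\mathcal C}$ has purely a.c.\ spectrum on $(u,v)$ and that $(u,v)$ lies in the essential support of the a.c.\ part. The main obstacle I anticipate is the bookkeeping in the last paragraph of the previous step: carefully transferring ``bounded Pr\"ufer amplitude for a.e.\ $z$, almost surely'' through Fubini and through the subordinacy machinery for CMV matrices, since unlike the Jacobi case the eigenfunctions come from the Gesztesy--Zinchenko recursion rather than Szeg\H o recursion, so one must invoke \cite{DFLY} and the established equivalence of boundedness conditions rather than arguing directly. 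The martingale/product-bound part is routine once the first-order recursion for $Z(n)$ from Section~\ref{PruferSzego} is in hand.
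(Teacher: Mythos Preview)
Your core strategy matches the paper's: expand the Pr\"ufer recursion from Section~\ref{PruferSzego}, use independence and $\mathbb{E}(\alpha_n')=0$ to kill the first-order terms in expectation, obtain a product bound $\prod(1+C\,\mathbb{E}|\alpha_n'|^2)<\infty$ on a moment of $R(n)$, apply Fatou to get an a.s.\ integral bound, and finish with a Last--Simon type criterion. The paper's sketch does exactly this, pointing to Theorem~10.7.5 of \cite{SimonOPUC2} for the CMV analogue of \eqref{LastSimonCriterion} and to the expansion \eqref{rho'toalpha'} to extract the terms linear in $\alpha_n',\bar\alpha_n'$ from \eqref{recursionCMV}.

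A few places where the paper is more streamlined than your outline. First, the recursion \eqref{recursionCMV} is \emph{not} literally linear in $\alpha_n'$: it contains $\rho_n'$ and $(\rho_n+\rho_n')^{-1}$, so one must expand via \eqref{rho'toalpha'} to isolate the linear part; your schematic form glosses over this. Second, your first two paragraphs develop a pointwise-in-$z$ martingale argument which, as you yourself note, does not directly yield the needed ``for a.e.\ $z$, almost surely'' statement; the paper skips this detour entirely and works with $\mathbb{E}(R(n)^4)$ from the start (fourth moment, not second, to match the $\lVert T\rVert^4$ in the Last--Simon criterion). Third, your final step routes through the Gesztesy--Zinchenko recursion and \cite{DFLY} to reach subordinacy theory; this is unnecessary here, since the CMV Last--Simon criterion (Theorem~10.7.5 of \cite{SimonOPUC2}) is stated directly in terms of the Szeg\H o transfer matrices $T(N,z)$, so bounded $R(n)$ feeds in immediately. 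The $R(n)^{-2}$ bounds you mention are likewise not needed.
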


Theorems ~\ref{TrandomJacobi} and ~\ref{TrandomCMV} are proved in Section \ref{SrandomJacobi}.

\end{subsection}

\begin{subsection}{Decaying oscillatory perturbations}
As another application of our generalized Pr\"ufer variables for Jacobi and Szeg\H o recursions, we prove a result about decaying oscillatory perturbations of periodic Jacobi or CMV operators. We study a class of perturbations which includes finite or infinite linear combinations of Wigner--von Neumann type perturbations $\sin(n\beta) / n^\gamma$, $\gamma >0$. Perturbations of this form have been discussed since a paper of Wigner--von Neumann~\cite{WvN} showed that such a perturbation of the free Laplacian can create a single embedded eigenvalue in the essential spectrum of the Schr\"odinger operator.

Their spectral properties have been well understood in the $L^2$ regime \cite{Atkinson54,HarrisLutz75,KiselevLastSimon98,Wong,Kurasov-Naboko,Naboko-Simonov,Kurasov-Simonov} and, more recently, in the $L^p$ regime with arbitrary $p<\infty$ \cite{JanasSimonov10,Lukic-OP,Lukic-continuous,Lukic-infinite,Lukic--Ong}. Here we will show the analogue of a result from \cite{Lukic--Ong}, proving in great generality that such perturbations preserve absolutely continuous spectrum of a periodic Jacobi or CMV matrix.

The proof is an adaptation of the proof in \cite{Lukic--Ong}, but it also contains a new ingredient. A part of the proof in \cite{Lukic--Ong} relies on cancellations established by some explicit calculations, which would have been prohibitively long in the current setting. Instead, the calculations are replaced by an indirect argument, which uses general observations to show that the cancellations must occur;  these are stated in Lemmas~\ref{InitialConditionsTrick} and \ref{InitialConditionsTrickCMV} below.

\begin{thm} \label{maintheorem} Consider a $q$-periodic Jacobi matrix $\mathcal J$ with coefficients $a_n, b_n$ and its perturbation  $\tilde {\mathcal J}$ with Jacobi coefficients $a_n + a_n'$,  $b_n + b_n'$. Assume that the sequences $\{a_n'\}_{n=1}^\infty$, $\{b_n'\}_{n=1}^\infty$ can be written in the form
\begin{equation}\label{1.13}
a_n' =  \sum_{l=1}^\infty c_{2l-1} e^{-in\phi_{2l-1}} \varsigma^{(2l-1)}_{n}, \qquad b_{n+1}' =  \sum_{l=1}^\infty c_{2l} e^{-in\phi_{2l}} \varsigma^{(2l)}_{n},
\end{equation}
where $c_l\in\mathbb{C}$, $\phi_l\in\mathbb{R}$, and for some integer $p\ge 2$ and some real number $\beta \in (0,\frac 1{p-1})$, the following conditions hold:
\begin{enumerate}[(i)]
\item (uniformly bounded variation)
\begin{equation}\label{tau}
\tau=\sup_l \sum_{n=1}^\infty \lvert \varsigma_{n+1}^{(l)} - \varsigma_n^{(l)} \rvert <  \infty;
\end{equation}
\item (uniform $\ell^p$ condition)
\begin{equation}
\sup_l \sum_{n=1}^\infty \lvert \varsigma_n^{(l)} \rvert^p < \infty;
\end{equation}
\item (decay of coefficients)
\begin{equation}\label{decayofcoefficients}
\sum_{l=1}^\infty \lvert c_l \rvert^{\beta} < \infty.
\end{equation}

\end{enumerate}
Then there is a set $S \subset \sigma_\ess(\mathcal J)$ of Hausdorff dimension at most $\beta(p-1)$ such that for $E \in \sigma_\ess(\mathcal J) \setminus S$, all eigensolutions of  $\tilde{\mathcal J}$ are bounded. In particular, 
\[
\sigma_\ac(\tilde {\mathcal J}) = \sigma_\ac(\mathcal J)
\]
and
\[
\tilde\mu_\s( \sigma_\ess(\mathcal J) \setminus S)=0,
\]
where $\tilde \mu_\s$ denotes the singular part of the canonical spectral measure corresponding to $\tilde {\mathcal J}$.
\end{thm}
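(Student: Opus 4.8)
The plan is to use the generalized Prüfer variables $Z(n) = R(n)e^{i\eta(n)}$ attached to the $q$-periodic background $\mathcal J$, and to convert the boundedness of eigensolutions of $\tilde{\mathcal J}$ into a statement about boundedness of $R(n)$. Taking $\log R(n)^2$, the first-order Prüfer recursion (derived in Section~\ref{JacobiPrufer}) yields a telescoping sum $\log R(N)^2 = \log R(0)^2 + \sum_{n} F(n, \eta(n))$, where the increment $F$ depends on the perturbation coefficients $a_n', b_{n+1}'$ and on the current phase $\eta(n)$, and where $F = O(\lvert a_n'\rvert + \lvert b_{n+1}'\rvert + \text{(quadratic terms)})$. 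The quadratic and higher terms are handled by the uniform $\ell^p$ condition (ii) together with a Prüfer-ladder/iteration argument in the style of \cite{Lukic-OP,Lukic-infinite,Lukic--Ong}: one repeatedly substitutes the recursion into itself $p-1$ times, so that the only terms that are not manifestly summable are oscillatory sums of the form $\sum_n c_{l_1}\cdots c_{l_j} \varsigma^{(l_1)}_n\cdots\varsigma^{(l_j)}_n \, e^{in\psi} e^{ik\eta(n)}$ with $j \le p-1$, $k$ an integer, and $\psi$ built from the $\phi_l$'s and the quasimomentum of $\mathcal J$ at energy $E$.

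The core of the argument is then a phase/resonance analysis. Because $\varsigma^{(l)}$ has uniformly bounded variation by (i), summation by parts reduces each such oscillatory sum to $\sup_N \lvert \sum_{n\le N} e^{in\psi} e^{ik\eta(n)}\rvert$ times an $\ell^1$-type weight controlled by $\tau$ and by $\sum \lvert c_l\rvert^\beta$. The phase $\eta(n)$ evolves, to leading order, by a fixed rotation determined by the quasimomentum of $\mathcal J$ plus a slowly varying (bounded-variation, small) correction, so $e^{in\psi}e^{ik\eta(n)}$ is a nonresonant oscillatory factor and its partial sums are uniformly bounded \emph{unless} $\psi + k\,(\text{quasimomentum})$ is a multiple of $2\pi$. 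The set of energies $E$ for which some resonance $\psi + k\cdot(\text{quasimomentum}(E)) \in 2\pi\Z$ occurs for one of the finitely-relevant tuples at a given "generation" is, after accounting for all generations up to $p-1$ and all choices of indices weighted by $\sum\lvert c_l\rvert^\beta$, a set $S$ of Hausdorff dimension at most $\beta(p-1)$; this is exactly the mechanism and the dimension bound from \cite{Lukic--Ong}, and it is where condition (iii) enters decisively, controlling the measure-theoretic size of the exceptional set via a covering argument in which each resonant index $l$ contributes an interval of length comparable to $\lvert c_l\rvert$, summed in the $\beta$-Hausdorff-content sense.

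A genuinely new ingredient, as flagged in the introduction, is needed at one point: the iteration scheme produces certain "boundary" contributions, and in \cite{Lukic--Ong} these were shown to cancel by a long explicit computation. Here we instead invoke Lemmas~\ref{InitialConditionsTrick} and \ref{InitialConditionsTrickCMV}: by choosing, or comparing against, a convenient family of initial conditions for the eigensolution $u$ and exploiting that the conclusion (boundedness of \emph{all} solutions) is initial-condition independent, one shows abstractly that the potentially dangerous terms must vanish or be summable, without computing them. For $E \in \sigma_\ess(\mathcal J)\setminus S$ one then gets $\sup_n R(n) < \infty$ for one, hence (by the Wronskian/two-solution argument) for all eigensolutions, so all eigensolutions of $\tilde{\mathcal J}$ are bounded on $\sigma_\ess(\mathcal J)\setminus S$. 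The spectral consequences follow from standard subordinacy theory: bounded eigensolutions off a set of Hausdorff dimension $<1$ give $\sigma_\ac(\tilde{\mathcal J}) \supseteq \sigma_\ess(\mathcal J)$ and $\tilde\mu_\s(\sigma_\ess(\mathcal J)\setminus S) = 0$, while the reverse inclusion $\sigma_\ac(\tilde{\mathcal J})\subseteq\sigma_\ac(\mathcal J)$ follows because the perturbation is a norm-limit of finite-rank perturbations (indeed $\ell^1$-type along subsequences, or via a standard trace-class/Kato--Rosenblum-free argument using that $\sigma_\ess$ is preserved and the background a.c.\ spectrum fills $\sigma_\ess(\mathcal J)$).

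The step I expect to be the main obstacle is the resonance bookkeeping in the iteration: tracking which products of $c_l$'s and which integer combinations $k\eta(n)$ survive at each of the $p-1$ generations, verifying that only finitely many "shapes" of phase matter at each generation, and assembling the per-generation exceptional sets into a single set of Hausdorff dimension $\le \beta(p-1)$ with the right covering estimate driven by $\sum\lvert c_l\rvert^\beta < \infty$. The adaptation from the Schrödinger/periodic setting of \cite{Lukic--Ong} to the Jacobi recursion with nonconstant $a_n$ requires care in how $a_n$ and $a_n'$ enter the Prüfer increment $F$, but the structural features (bounded-variation envelopes, nonresonant rotation, $\ell^p$ tails) are preserved, so the argument goes through mutatis mutandis once Lemmas~\ref{InitialConditionsTrick} and \ref{InitialConditionsTrickCMV} remove the need for the explicit cancellation computation.
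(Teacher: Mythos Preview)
Your outline is correct and follows the paper's approach closely: Pr\"ufer variables, Taylor-expand $\log(Z(n+1)/Z(n))$ to order $p-1$, expand via \eqref{1.13}, iterate a summation-by-parts (Lemmas~\ref{fglemma}, \ref{mainlemma}) driven by the bounded-variation envelopes, and control the small divisors/resonances off a set of Hausdorff dimension $\le \beta(p-1)$ (Lemma~\ref{hausdorffdimlemma}).

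One point is slightly mischaracterized and would trip you up if you tried to implement it as written. You describe Lemma~\ref{InitialConditionsTrick} as showing that the dangerous boundary terms ``must vanish or be summable,'' and then separately invoke a Wronskian argument to pass from one solution to all. That is not the mechanism. The dangerous terms are the $M=0$ pieces $S_{j,0}(n)$ in \eqref{P}---those with no factor of $e^{-2i\eta(n)}$---and they are \emph{not} shown to be summable; rather, they are observed to be independent of the solution $u$. One then proves that $\sum_n\bigl(\log\frac{Z_u(n+1)}{Z_u(n)} - \mathcal A(n)\bigr)$ converges uniformly in $u$ with $\mathcal A(n)=\sum_j S_{j,0}(n)$, and Lemma~\ref{InitialConditionsTrick} \emph{itself} contains the Wronskian identity $W=\tfrac{i\omega}{2}R_{u_1}R_{u_2}\sin(\eta_{u_1}-\eta_{u_2})$ to deduce directly that every $R_u(n)$ converges. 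For this to work you must track the full complex $\log Z$, not just $\log R^2$ as you wrote, since control of $\eta_{u_1}-\eta_{u_2}$ is essential. Also, your reverse inclusion $\sigma_\ac(\tilde{\mathcal J})\subseteq\sigma_\ac(\mathcal J)$ needs no trace-class machinery: the perturbation is compact (coefficients decay), so $\sigma_\ess$ is preserved, and $\sigma_\ac(\tilde{\mathcal J})\subseteq\sigma_\ess(\tilde{\mathcal J})=\sigma_\ess(\mathcal J)=\sigma_\ac(\mathcal J)$.
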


\begin{remark}
In \eqref{1.13}, $b'$ is indexed with $n+1$ purely for later notational convenience.
\end{remark}

\begin{thm} \label{maintheoremCMV}Consider a $q$-periodic CMV matrix $\mathcal C$ with Verblunsky coefficients $\alpha(n)$ and its perturbation  $\tilde {\mathcal C}$ with Verblunsky coefficients $\alpha(n)+\alpha'(n)$. Assume that the sequence $\{\alpha'(n)\}_{n=0}^\infty$ can be written in the form
\begin{equation}\label{alpha'CMV}
\alpha'(n) =  \sum_{l=1}^\infty c_{l} e^{-in\phi_{l}} \varsigma^{(l)}_{n},
\end{equation}
where $c_{l}\in\mathbb{C}$, $\phi_{l}\in\mathbb{R}$, and for some integer $p\ge 2$ and some real number $\beta \in (0,\frac 1{p-1})$, conditions (i), (ii), (iii) of the previous theorem hold.
Then there is a set $S \subset \sigma_\ess(\mathcal C)$ of Hausdorff dimension at most $\beta(p-1)$ such that for $z \in \sigma_\ess(\mathcal C) \setminus S$, all eigensolutions of  $\tilde {\mathcal C}$ are bounded, 
\begin{equation}\label{acC}
\sigma_\ac(\tilde {\mathcal C}) = \sigma_\ac(\mathcal C),
\end{equation}
and
\[
\tilde\mu_\s( \sigma_\ess(\mathcal C) \setminus S)=0,
\]
where $\tilde \mu_\s$ denotes the singular part of the canonical spectral measure of $\tilde {\mathcal C}$.
\end{thm}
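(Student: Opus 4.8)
The plan is to follow the proof of Theorem~\ref{maintheorem}, replacing the Jacobi ingredients by their CMV counterparts and using the generalized Pr\"ufer variables for the Szeg\H o recursion constructed in Section~\ref{PruferSzego}. As reference solutions I would take the Bloch (Floquet) solutions $v, v^*$ of the unperturbed $q$-periodic Szeg\H o recursion: on $\sigma_\ess(\mathcal C)$ the one-period transfer matrix has eigenvalues $e^{\pm i k(z)}$ with a real Bloch phase $k(z)$, so $v, v^*$ are bounded and linearly independent there, the Pr\"ufer representation \eqref{uvequation} is available, and $R(n) = \lvert Z(n) \rvert$ is comparable to $\lVert u(n) \rVert$. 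I would then write out the first-order recursion obeyed by $Z(n)$ with driving term $\alpha'(n)$ and insert the expansion \eqref{alpha'CMV}: the driving term becomes a sum over $l$ of terms oscillating at frequencies of the form $\phi_l$ modified by multiples of $k(z)$ coming from the periodic background. These split into \emph{non-resonant} terms, whose frequency is bounded away from $0$ modulo $2\pi$ (so that their partial sums are uniformly bounded), and \emph{resonant} terms, whose frequency vanishes modulo $2\pi$.

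The core of the argument is an iterative higher-order Pr\"ufer (EFGP-type) scheme run $p-1$ times. At each stage the non-resonant contributions are removed by summation by parts; this is licensed by the uniformly bounded variation hypothesis~(i), which bounds the boundary and variation terms uniformly in $l$, and is kept under control by the uniform $\ell^p$ hypothesis~(ii), and each such step trades one order of decay. After $p-1$ steps the residual perturbation of the $Z$-recursion is summable, and a Gronwall/Levinson-type estimate yields $\sup_n R(n) < \infty$, i.e.\ boundedness of all eigensolutions. The obstruction is the resonant terms. A resonance at stage $k$ occurs, for a given $k$-tuple $(l_1, \dots, l_k)$, exactly at those $z \in \sigma_\ess(\mathcal C)$ for which $\phi_{l_1} + \dots + \phi_{l_k}$ equals one of finitely many admissible integer multiples of $k(z)$ --- a finite set of points on the unit circle --- and near such a point the scheme only closes after spending enough of the remaining decay budget, so one excises an arc whose length is governed by $\lvert c_{l_1} \cdots c_{l_k} \rvert$. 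Taking $S \subset \sigma_\ess(\mathcal C)$ to be the union over $k \le p-1$ and over all such tuples of these arcs, the decay hypothesis~(iii), $\sum_l \lvert c_l \rvert^\beta < \infty$, makes the $\beta(p-1)$-dimensional Hausdorff content of $S$ finite (and permits arbitrarily fine covers), so $\dim_{\mathrm H} S \le \beta(p-1) < 1$, exactly as in \cite{Lukic--Ong}.

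One family of resonant terms cannot be excised: the \emph{secular} (diagonal) terms whose frequency vanishes identically, which would otherwise force $R(n)$ to grow on an entire band rather than on a small set. In \cite{Lukic--Ong} such terms were shown to cancel by an explicit computation that would be prohibitively long in the present generality; here it is replaced by the indirect argument of Lemma~\ref{InitialConditionsTrickCMV}. The point is that \eqref{uvequation} defines $Z(n)$ for \emph{every} solution of the perturbed recursion, and in particular for the family of initial conditions \eqref{u0}; comparing the $Z$-evolution across this family and using the constancy of a Wronskian-type bilinear form in $v, v^*$ (reflecting that the transfer matrices $z^{-1/2} A(\alpha_n, z)$ have unit determinant), one concludes that the secular coefficient must vanish, since a nonzero value would force a common exponential rate for all solutions, contradicting the conservation law. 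With the secular terms gone, the iteration closes and gives bounded eigensolutions for $z \in \sigma_\ess(\mathcal C) \setminus S$.

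It remains to pass from bounded eigensolutions to the spectral conclusions. Via the relation between Szeg\H o transfer matrices and Gesztesy--Zinchenko eigenfunctions recorded in \cite{DFLY,GZ}, boundedness of $R(n)$ for all initial data is equivalent to boundedness of the Gesztesy--Zinchenko eigensolutions, hence of the associated transfer matrices, on $\sigma_\ess(\mathcal C) \setminus S$. Since $S$ has zero Lebesgue measure, $\sigma_\ess(\mathcal C) = \sigma_\ac(\mathcal C)$ for periodic $\mathcal C$, and $\sigma_\ess(\tilde{\mathcal C}) = \sigma_\ess(\mathcal C)$ because the perturbation is compact, the standard bounded-transfer-matrices criterion gives purely a.c.\ spectrum on each band minus $S$, whence \eqref{acC}, and shows $\tilde\mu_\s$ assigns no mass to $\sigma_\ess(\mathcal C) \setminus S$. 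The main difficulty I anticipate is the bookkeeping in the middle two paragraphs: tracking how the resonances --- and the secular terms requiring Lemma~\ref{InitialConditionsTrickCMV} --- proliferate through the $p-1$ iterations while simultaneously keeping the Hausdorff content of the accumulating exceptional set under control.
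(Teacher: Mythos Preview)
Your overall strategy matches the paper's: reduce to the Jacobi argument by using the CMV Pr\"ufer recursion \eqref{recursionCMV}, expand in the perturbation to reach the form \eqref{P}, run the iterative summation-by-parts scheme, control the exceptional set via Lemma~\ref{hausdorffdimlemma}, and invoke Lemma~\ref{InitialConditionsTrickCMV} at the end. One technical point you omit: the recursion \eqref{recursionCMV} involves both $\alpha'_n$ and $\bar\alpha'_n$, so to get exactly the form \eqref{P} the paper reindexes, writing $\alpha'(n) = \sum_l c_{2l-1} e^{-in\phi_{2l-1}} \varsigma^{(2l-1)}_n$ and $\bar\alpha'(n) = \sum_l c_{2l} e^{-in\phi_{2l}} \varsigma^{(2l)}_n$.

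However, you have misdescribed the role of Lemma~\ref{InitialConditionsTrickCMV}. The lemma does \emph{not} show that the secular coefficients vanish. In the paper's scheme the relevant terms are the $M=0$ terms $S_{j,0}(n)$ in \eqref{SJMn}: their defining feature is not that their frequency vanishes identically, but that they contain no factor of $e^{-2i\eta(n)}$ and are therefore \emph{independent of the choice of solution} $u$. These terms are never shown to be summable or to vanish; they are simply taken as the sequence $\mathcal A(n)$ in the hypothesis of Lemma~\ref{InitialConditionsTrickCMV}. The lemma then concludes directly that $R_u(n)$ converges, via the Wronskian identity
\[
W(u_1,u_2)(n) = 2i\omega\, R_{u_1}(n) R_{u_2}(n) \sin(\eta_{u_1}(n)-\eta_{u_2}(n)),
\]
by arranging --- through a careful choice of $u_2$ and, crucially, the \emph{uniform} convergence of \eqref{ml01CMV} --- that $\sin(\eta_{u_1}(n) - \eta_{u_2}(n))$ has a nonzero limit. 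Your proposed contradiction argument, that a nonzero secular coefficient would force a common exponential rate for all solutions and thereby violate the conservation law, does not close on its own: a common growth rate for all $R_u$ is perfectly consistent with constancy of the Wronskian provided $\sin(\eta_{u_1}(n) - \eta_{u_2}(n)) \to 0$, and ruling this out is exactly the nontrivial content of the lemma.
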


Proofs of Theorems~\ref{maintheorem} and \ref{maintheoremCMV} are presented in Section~\ref{SectionDecaying}.

\end{subsection}

\end{section}
\begin{section}{Pr\"ufer variables for the Jacobi recursion}\label{JacobiPrufer}
Let us define $\gamma(n)$ by 
 \begin{equation}
 \varphi(n)=\vert\varphi(n)\vert e^{i\gamma(n)}.
 \end{equation}
We can ensure uniqueness of $\gamma$ by setting $\gamma(0)\in [0,2\pi)$, $\gamma(n)-\gamma(n-1)\in [0,2\pi)$.

Let us also define some variations on the Wronskian. For two sequences $f,g$, we have
\begin{align*}
W_{0,0}(f,g)=& a_{n+1} f(n)g(n+1)-a_{n+1}f(n+1)g(n),\\
W_{a',a'}(f,g)=& (a_{n+1}+a_{n+1}') f(n)g(n+1)-(a_{n+1}+a_{n+1}')f(n+1)g(n),\\
W_{0,a'}(f,g)=& (a_{n+1}+a_{n+1}') f(n)g(n+1)-a_{n+1}f(n+1)g(n).
\end{align*}
The motivation behind the notation $W_{*_1,*_2}(f,g)$ is that we will choose $f$ to be a solution to the Jacobi recursion with the $a_n$ perturbed by $*_1$, and $g$ to be a solution to the Jacobi recursion with the $a_n$ perturbed by $*_2$.

If we assume 
    \[a_{n+1}f(n+1)+a_nf(n-1)=(x-b_{n+1})f(n),\] and \[(a_{n+1}+a_{n+1}')g(n+1)+(a_{n}+a_{n}')g(n-1)=(x-b_{n+1}-b_{n+1}')g(n),\]
   then
 \begin{align}\nonumber
 W_{0,a'}(f,g)(n)-W_{0,a'}(f,g)(n-1)=&-b'_{n+1}f(n)g(n)\\
 &-a'_n(f(n)g(n-1)+f(n-1)g(n)).\label{WronskianDifference}
 \end{align}
 Since $\varphi, \overline\varphi$ are linearly independent solutions of (\ref{varphi}), by constancy of the Wronskian, we have 
 \begin{equation}\label{omegadefn}
 W_{0,0}(\overline\varphi,\varphi)(n)=2ia_{n+1}\mathrm{Im}(\overline{\varphi(n)}\varphi(n+1))=i\omega,
 \end{equation}
 for some real nonzero constant $\omega$. 
Thus,
 \begin{equation}\label{gamma-omega}
 2 \vert\varphi(n)\vert\cdot\vert \varphi(n+1)\vert a_{n+1}\sin(\gamma(n+1)-\gamma(n))=\omega.
\end{equation}
We can use Wronskians to invert (\ref{rho1}) to get
\begin{equation}\label{rho-omega}
Z(n)=\frac{2}{\omega}W_{0,a'}(\overline \varphi,u)(n-1),
\end{equation}
which is the same as (41) in \cite{KRS}.

\begin{thm}\label{t.OPRL} Pr\"ufer variables obey the first-order recursion relation
\begin{align*} \frac{Z(n+1)}{Z(n)}  = &  1-\frac{i}{\omega} \frac{a_n}{a_n+a_n'} b_{n+1}'\vert \varphi(n)\vert^2 (e^{-2i(\eta(n)+\gamma(n))} -1)\\
&+\frac{i}{\omega}  a_{n}'\vert \varphi(n-1)\vert\cdot \vert \varphi(n)\vert e^{i(\gamma(n-1)-\gamma(n))}\\
&-\frac{i}{\omega}  a_{n}'\vert \varphi(n-1)\vert\cdot \vert \varphi(n)\vert e^{-2i\eta(n)}e^{-i(\gamma(n-1)+\gamma(n))}\\
&+\frac{i}{\omega}\frac{a_n}{a_n+a_n'} a_{n}' (1-e^{-2i(\eta(n)+\gamma(n))})\vert \varphi(n-1)\vert\cdot \vert \varphi(n)\vert e^{-i(\gamma(n-1)-\gamma(n))}.
\end{align*}
\end{thm}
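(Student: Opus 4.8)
The plan is to compute $W_{0,a'}(\overline{\varphi},u)(n) - W_{0,a'}(\overline{\varphi},u)(n-1)$ using the Wronskian difference formula \eqref{WronskianDifference} and then translate back to $Z$ via the inversion formula \eqref{rho-omega}. Here $f = \overline{\varphi}$ solves the unperturbed recursion (with $a_n$ unperturbed, $b_n$ unperturbed) and $g = u$ solves the perturbed recursion (with $a_n + a_n'$, $b_{n+1} + b_{n+1}'$), so \eqref{WronskianDifference} applies directly and gives
\[
W_{0,a'}(\overline\varphi,u)(n) - W_{0,a'}(\overline\varphi,u)(n-1) = -b'_{n+1}\overline{\varphi(n)}u(n) - a'_n\bigl(\overline{\varphi(n)}u(n-1) + \overline{\varphi(n-1)}u(n)\bigr).
\]
Multiplying by $2/\omega$ and using \eqref{rho-omega}, this reads
\[
Z(n+1) - Z(n) = \frac{2}{\omega}\Bigl[ -b'_{n+1}\overline{\varphi(n)}u(n) - a'_n\bigl(\overline{\varphi(n)}u(n-1) + \overline{\varphi(n-1)}u(n)\bigr)\Bigr].
\]

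Next I would express the appearances of $u(n)$ and $u(n-1)$ on the right-hand side in terms of $Z(n)$, $\overline{Z(n)}$, $\varphi$, $\overline\varphi$. From \eqref{rho1}, reading off the two components,
\[
(a_n + a_n')u(n) = \tfrac{1}{2i}\bigl(Z(n)a_n\varphi(n) - \overline{Z(n)}a_n\overline{\varphi(n)}\bigr), \qquad u(n-1) = \tfrac{1}{2i}\bigl(Z(n)\varphi(n-1) - \overline{Z(n)}\overline{\varphi(n-1)}\bigr),
\]
so $u(n) = \tfrac{1}{2i}\tfrac{a_n}{a_n+a_n'}\bigl(Z(n)\varphi(n) - \overline{Z(n)}\overline{\varphi(n)}\bigr)$. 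Substituting these into the three terms above produces, for each term, a piece proportional to $Z(n)$ and a piece proportional to $\overline{Z(n)}$; after dividing through by $Z(n)$ and writing $Z(n) = R(n)e^{i\eta(n)}$ we get $\overline{Z(n)}/Z(n) = e^{-2i\eta(n)}$, and the amplitudes drop out. Then I would write each product $\overline{\varphi(k)}\varphi(m)$ as $\lvert\varphi(k)\rvert\lvert\varphi(m)\rvert e^{i(\gamma(m)-\gamma(k))}$ using the definition of $\gamma$. Collecting: the $b'_{n+1}$ term yields $-\tfrac{i}{\omega}\tfrac{a_n}{a_n+a_n'}b'_{n+1}\lvert\varphi(n)\rvert^2(e^{-2i(\eta(n)+\gamma(n))} - 1)$ (using $\tfrac{2}{\omega}\cdot\tfrac{1}{2i} = -\tfrac{i}{\omega}$ and factoring $\lvert\varphi(n)\rvert^2$); the $\overline{\varphi(n-1)}u(n)$ part of the $a'_n$ term splits into the second and fourth displayed lines; and the $\overline{\varphi(n)}u(n-1)$ part gives the third line together with a contribution that combines with the others. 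A careful bookkeeping of which of $u(n)$ versus $u(n-1)$ carries the factor $\tfrac{a_n}{a_n+a_n'}$ is what distinguishes the four lines in the statement.

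The calculation is entirely routine; the main obstacle is purely organizational — keeping track of signs, of the factors $\tfrac{1}{2i}$ and $\tfrac{a_n}{a_n+a_n'}$, and of which phase combination $e^{i(\gamma(n-1)-\gamma(n))}$ versus $e^{-i(\gamma(n-1)+\gamma(n))}$ attaches to the $Z(n)$ versus $\overline{Z(n)}$ part of each of the two summands inside the $a'_n$ term. One consistency check worth doing at the end: setting $a_n = 1$, $a_n' = 0$ should recover the known OPRL recursion of \cite{KRS}, and setting $a_n' = 0$ but keeping $a_n$ general should collapse the four lines to just the first two. I would also double-check the identity $\tfrac{2}{\omega}\cdot\tfrac{1}{2i}a_n = -\tfrac{i}{\omega}a_n$ and note that no $R(n)$-dependence survives precisely because both $Z(n)$ and $\overline{Z(n)}$ scale with $R(n)$, so the ratio $Z(n+1)/Z(n)$ depends only on the phases $\eta(n)$, $\gamma(n)$, $\gamma(n-1)$ as claimed.
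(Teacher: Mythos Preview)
Your approach is correct and essentially identical to the paper's: both compute $Z(n+1)-Z(n)$ via \eqref{rho-omega} and \eqref{WronskianDifference}, substitute for $u(n)$ and $u(n-1)$ from \eqref{rho1}, divide by $Z(n)$, and rewrite everything in terms of the phases $\eta(n),\gamma(n),\gamma(n-1)$ (the paper passes through $\sin\theta(n)$ with $\theta(n)=\eta(n)+\gamma(n)$ and then uses $e^{-i\theta}\sin\theta=\tfrac1{2i}(1-e^{-2i\theta})$, which is equivalent to your direct exponential bookkeeping). One small slip in your narrative: the factor $\tfrac{a_n}{a_n+a_n'}$ rides with $u(n)$, not $u(n-1)$, so it is the $\overline{\varphi(n)}u(n-1)$ term that produces lines~2 and~3 (no such factor) and the $\overline{\varphi(n-1)}u(n)$ term that produces line~4; this does not affect the method.
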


This becomes a first-order nonlinear equation for $Z(n)$ if we make the substitution $e^{-2i\eta(n)} = \bar Z(n) / Z(n)$; alternatively, it gives a system of first-order equations for $R(n)$, $\eta(n)$ if we take its absolute value to get a formula for $R(n+1)/R(n)$, and divide it by its complex conjugate to get a formula for $e^{2i(\eta(n+1)-\eta(n))}$.

\begin{proof}[Proof of Theorem~\ref{t.OPRL}]
For notational convenience, let us define
\[\theta(n)=\eta(n)+\gamma(n).\]
By \eqref{rho1},
\begin{align*}
u(n)=& \frac{a_n}{a_n+a_n'}R(n)\vert \varphi(n)\vert
 \sin(\theta(n)),\\
 u(n-1)=& R(n)\vert \varphi(n-1)\vert\sin(\theta(n)-\gamma(n)+\gamma(n-1)).
\end{align*}

We have by (\ref{rho-omega}), (\ref{WronskianDifference}), (\ref{u}), (\ref{varphi}), 
\begin{align}
Z(n+1)-Z(n)=&-\frac{2}{\omega}(b_{n+1}'u(n)\overline{\varphi(n)}+a_n' \overline{\varphi(n)}u(n-1)+a_{n}'\overline{\varphi(n-1)}u(n))\nonumber\\
=&-\frac{2}{\omega}\frac{a_n}{a_n+a_n'}b_{n+1}' R(n)\sin(\theta(n))\vert \varphi(n)\vert^2 e^{-i\gamma(n)}\nonumber\\
&-\frac{2}{\omega}a_{n}' R(n)\sin(\theta(n)-\gamma(n)+\gamma(n-1))\vert \varphi(n)\vert\cdot\vert \varphi(n-1)\vert e^{-i\gamma(n)}\nonumber\\
&-\frac{2}{\omega}\frac{a_n}{a_n+a_n'}a_{n}' R(n)\sin(\theta(n))\lvert \varphi(n-1)\rvert\cdot \lvert \varphi(n)\rvert e^{-i\gamma(n-1)}.
\end{align}
Dividing by $Z(n) = R(n) e^{i\eta(n)}$, we then have
\begin{align*} \frac{Z(n+1)}{Z(n)} - 1 =& -\frac{2}{\omega} \frac{a_n}{a_n+a_n'}b_{n+1}'\vert \varphi(n)\vert^2 \sin (\theta(n)) e^{-i\theta(n)}\\
&- \frac{2}{\omega} a_{n}'\vert \varphi(n-1)\vert\cdot \vert \varphi(n)\vert \sin (\theta(n)+\gamma(n-1)-\gamma(n)) e^{-i\theta(n)}\\
&-\frac{2}{\omega}\frac{a_n}{a_n+a_n'} a_{n}' \sin(\theta(n))\vert \varphi(n-1)\vert\cdot \vert \varphi(n)\vert e^{-i(\theta(n)+\gamma(n-1)-\gamma(n))}.
\end{align*}
Using $e^{-i\theta_n} \sin \theta_n = \frac 1{2i} (1 - e^{-2i\theta_n})$ in the first and third lines and a similar identity in the second line, we obtain

\begin{align*} \frac{Z(n+1)}{Z(n)} =& 1 + \frac{i}{\omega} \frac{a_n}{a_n+a_n'}b_{n+1}'\vert \varphi(n)\vert^2 (1 -  e^{-2i\theta(n)}) \\
& + \frac{i}{\omega} a_{n}'\vert \varphi(n-1)\vert\cdot \vert \varphi(n)\vert e^{i(\gamma(n-1)-\gamma(n))} (1 -  e^{-2i(\theta(n)+\gamma(n-1)-\gamma(n))}) \\
&+ \frac{i}{\omega}\frac{a_n}{a_n+a_n'} a_{n}' \vert \varphi(n-1)\vert\cdot \vert \varphi(n)\vert e^{-i(\gamma(n-1)-\gamma(n))} (1 -  e^{-2i\theta(n)}).
\end{align*}
which completes the proof. \end{proof}

As a first application of the Pr\"ufer variables just introduced, we prove the following theorem.

\begin{thm}\label{l1perturb}
Assume that $\inf_n a_n > 0$ and that $a', b' \in \ell^1$. For any $x\in \mathbb{R}$ such that all solutions $\varphi$ of the recursion relation \eqref{varphi} are bounded, all solutions $u$ of the recursion relation \eqref{u} are bounded as well.
\end{thm}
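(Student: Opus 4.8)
The plan is to use the first-order recursion of Theorem~\ref{t.OPRL} to show that the Pr\"ufer amplitude $R(n)$ is bounded, and then to read off boundedness of $u$ from \eqref{rho1}. First I would put the Pr\"ufer variables in force: since the coefficients in \eqref{varphi} are real and $x\in\R$, there are two real-valued linearly independent solutions $\varphi_1,\varphi_2$, both bounded by hypothesis, so that $\varphi=\varphi_1+i\varphi_2$ is a bounded ``complex'' solution linearly independent with $\overline\varphi$; fix this $\varphi$ and let $\omega\neq 0$ be as in \eqref{omegadefn}. Likewise, since \eqref{u} has real coefficients it suffices to bound real-valued $u$, and we may assume $u\not\equiv 0$, in which case $Z(n)\neq 0$ for all $n$ (a zero of $Z$ would force $u(n)=u(n-1)=0$ by \eqref{rho1} and $a_n+a_n'>0$, hence $u\equiv 0$ by running \eqref{u} in both directions), so $R(n)=\lvert Z(n)\rvert>0$.

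Next I would record the two uniform bounds that make the right-hand side of Theorem~\ref{t.OPRL} manageable: $\sup_n\lvert\varphi(n)\rvert=:C_\varphi<\infty$, and $\sup_n \tfrac{a_n}{a_n+a_n'}<\infty$. The latter is the only place the hypothesis $\inf_n a_n>0$ enters: since $a'\in\ell^1$ we have $a_n'\to 0$, so $\lvert a_n'\rvert\le\tfrac12\inf_m a_m\le\tfrac12 a_n$ for all large $n$, giving $\tfrac{a_n}{a_n+a_n'}\le 2$ there, while the remaining finitely many ratios are finite because $a_n+a_n'>0$. Plugging these in, together with $\lvert\omega\rvert>0$ fixed and $\lvert e^{i(\cdot)}\rvert=1$, into the four terms of Theorem~\ref{t.OPRL} (whose $\lvert\varphi(n)\rvert^2$ and $\lvert\varphi(n-1)\rvert\lvert\varphi(n)\rvert$ factors are $\le C_\varphi^2$) yields a constant $K$ with $\bigl\lvert \tfrac{Z(n+1)}{Z(n)}-1\bigr\rvert\le K(\lvert a_n'\rvert+\lvert b_{n+1}'\rvert)$ for all $n$.

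The bound on $R$ is then immediate: $R(n+1)\le\bigl(1+K(\lvert a_n'\rvert+\lvert b_{n+1}'\rvert)\bigr)R(n)$, so telescoping and $a',b'\in\ell^1$ give
\begin{align*}
R(n)&\le R(1)\prod_{m=1}^{\infty}\bigl(1+K(\lvert a_m'\rvert+\lvert b_{m+1}'\rvert)\bigr)\\
&\le R(1)\exp\Bigl(K\sum_{m=1}^\infty(\lvert a_m'\rvert+\lvert b_{m+1}'\rvert)\Bigr)=:R_\infty<\infty.
\end{align*}
Finally, the second component of \eqref{rho1} reads $u(n-1)=\im\bigl[Z(n)\varphi(n-1)\bigr]$, whence $\lvert u(n-1)\rvert\le R(n)\lvert\varphi(n-1)\rvert\le R_\infty C_\varphi$ for every $n\ge 1$, i.e.\ $u$ is bounded; a complex-valued solution is then bounded since its real and imaginary parts are. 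I do not anticipate a real obstacle: the content is just the telescoping/Gronwall estimate, and the only subtlety is the routine verification that the coefficients multiplying $\lvert a_n'\rvert$ and $\lvert b_{n+1}'\rvert$ in the recursion are uniformly bounded — precisely where $\inf_n a_n>0$ and $a'\in\ell^1$ are used — together with the bookkeeping needed to set up the Pr\"ufer variables and discard the trivial solution.
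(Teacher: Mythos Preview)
Your proposal is correct and follows essentially the same approach as the paper: bound $\bigl\lvert Z(n+1)/Z(n)-1\bigr\rvert$ by an $\ell^1$ sequence using Theorem~\ref{t.OPRL} together with $\sup_n \lvert\varphi(n)\rvert<\infty$ and $\sup_n a_n/(a_n+a_n')<\infty$, then conclude boundedness of $R(n)$ and hence of $u$. The paper phrases the last step as ``$Z(n)$ converges,'' but your telescoping/Gronwall variant is equivalent; you also spell out more carefully why $\varphi$ can be taken complex, why $Z(n)\neq 0$, and exactly where $\inf_n a_n>0$ is used, all of which the paper leaves implicit.
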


\begin{proof}
Under the above assumptions, $\lim_{n\to\infty} \frac{a_n}{a_n + a_n'} = 1$ so
\[
M = \sup_n \frac{a_n}{a_n + a_n'} < \infty.
\]
It follows from Theorem~\ref{t.OPRL} that
\begin{equation}\label{ml02}
\left\lvert \frac{Z(n+1)}{Z(n)} - 1 \right\rvert \le \frac {\lVert \varphi\rVert_\infty^2}{\lvert \omega \rvert} \left( 2 M \lvert b'_{n+1} \rvert + 2 \lvert a_n' \rvert + 2 M \lvert a_n'\rvert \right).
\end{equation}
Since the right-hand side of \eqref{ml02} is $\ell^1$, it follows that $Z(n)$ converges as $n\to \infty$ and, in particular, $R(n)$ is bounded in $n$. This implies that $u(n)$ is bounded in $n$ and completes the proof.
\end{proof}

The following lemma will be necessary in the applications which follow.

\begin{lemma}\label{InitialConditionsTrick}
Fix $E$ and a solution $\varphi$ of \eqref{varphi}. For different solutions $u$ of \eqref{u}, denote the corresponding Pr\"ufer variables by $Z_u$, $R_u$, $\eta_u$. If there exists a sequence $\mathcal A(n)$ independent of $u$ such that the series
\begin{equation}\label{ml01}
\sum_{n=1}^\infty \left( \log \frac{Z_u(n+1)}{Z_u(n)}  - \mathcal A(n) \right)
\end{equation}
converges uniformly in solution $u$, then $R_u(n)$ converges as $n\to\infty$ for any solution $u$ and there is no subordinate solution at $E$.
\end{lemma}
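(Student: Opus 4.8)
The plan is to telescope the series \eqref{ml01} and then, using a conserved bilinear quantity built from solutions of \eqref{u}, to pin down the one piece of $Z_u$ that the series leaves undetermined.

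First I would telescope. For $u\not\equiv 0$ solving \eqref{u} one has $Z_u(n)\ne 0$ for all $n$ --- by \eqref{rho1}, $Z_u(n)=0$ forces $u(n)=u(n-1)=0$, hence $u\equiv 0$ --- so $\log Z_u$ admits a continuous branch and
\[
\log Z_u(N+1)=\log Z_u(1)+\sum_{n=1}^N\mathcal A(n)+T_u(N),\qquad T_u(N):=\sum_{n=1}^N\Bigl(\log\tfrac{Z_u(n+1)}{Z_u(n)}-\mathcal A(n)\Bigr).
\]
By hypothesis $T_u(N)\to S_u$ for each $u$. Writing $P(N):=\exp\bigl(\sum_{n=1}^N\mathcal A(n)\bigr)$, which does not depend on $u$, this yields $Z_u(N+1)/P(N)=Z_u(1)e^{T_u(N)}\to Z_u(1)e^{S_u}=:W_u$. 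For each $n$ the map $u\mapsto Z_u(n)$ is $\R$-linear (by \eqref{rho1} and linear independence of $\varphi,\overline\varphi$), hence so is $u\mapsto Z_u(N+1)/P(N)$ and so is its pointwise limit $u\mapsto W_u$; moreover $W_u\ne 0$ for $u\not\equiv 0$, since $e^{S_u}\ne 0$ while $Z_u(1)=0$ only for $u\equiv 0$. Thus $u\mapsto W_u$ is an $\R$-linear bijection from the two-dimensional space of real solutions of \eqref{u} onto $\C$.

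Second, I would upgrade this to convergence of $R_u=|Z_u|$. The Wronskian of any two solutions $u_1,u_2$ of \eqref{u} is independent of $n$ (by the computation underlying \eqref{omegadefn}), and rewriting it through the Pr\"ufer variables exactly as in \eqref{omegadefn}--\eqref{rho-omega} shows that $\im\bigl(\overline{Z_{u_1}(n)}\,Z_{u_2}(n)\bigr)=c_{12}$ is constant, with $c_{12}\ne 0$ precisely when $u_1,u_2$ are linearly independent. Since $u\mapsto W_u$ is onto $\C$, I can choose linearly independent $u_1,u_2$ with $\im(\overline{W_{u_1}}W_{u_2})\ne 0$. Substituting $Z_{u_j}(N+1)=P(N)(W_{u_j}+o(1))$ into
\[
\overline{Z_{u_1}(N+1)}\,Z_{u_2}(N+1)=|P(N)|^2\,\overline{(W_{u_1}+o(1))}\,(W_{u_2}+o(1))
\]
and taking imaginary parts gives $c_{12}=|P(N)|^2\bigl(\im(\overline{W_{u_1}}W_{u_2})+o(1)\bigr)$, so $|P(N)|^2\to c_{12}/\im(\overline{W_{u_1}}W_{u_2})>0$. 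Hence, for \emph{every} solution $u$, $R_u(N+1)=|P(N)|\cdot|W_u+o(1)|$ converges, to the positive number $\bigl(\lim_N|P(N)|\bigr)|W_u|$.

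Finally, for the absence of a subordinate solution I would combine two estimates. Boundedness of $R_u$ and the identity $u(n)=\frac{a_n}{a_n+a_n'}R_u(n)|\varphi(n)|\sin(\eta_u(n)+\gamma(n))$ from the proof of Theorem~\ref{t.OPRL} give $|u(n)|\lesssim|\varphi(n)|$ for every solution $u$; when $\varphi$ is bounded --- as in the intended applications, where $\varphi$ is a Floquet solution of the periodic background --- every solution of \eqref{u} is therefore bounded and $\lVert u\rVert_N\lesssim\sqrt N$. On the other hand, for linearly independent $u,w$ the constancy of $W_{a',a'}(u,w)$ makes $|u(n)w(n+1)-u(n+1)w(n)|$ bounded away from $0$ (using $\sup_n(a_n+a_n')<\infty$), so $\lVert u\rVert_N\lVert w\rVert_N\gtrsim N$ by the Cauchy--Schwarz inequality. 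Together these force $\lVert u\rVert_N\asymp\lVert w\rVert_N\asymp\sqrt N$ for every pair of linearly independent solutions, so no solution is subordinate at $E$, and the stated spectral consequences follow from Gilbert--Pearson subordinacy theory. I expect this last step to be the real obstacle: passing from ``every Pr\"ufer amplitude converges to a positive limit'' to ``no subordinate solution'' uses essentially that $\varphi$ is bounded (equivalently, that solutions of \eqref{u} neither grow nor decay relative to one another); the only other point needing care is the conserved identity $\im(\overline{Z_{u_1}}Z_{u_2})=\mathrm{const}$, which is the exact analogue of \eqref{omegadefn}--\eqref{rho-omega} with $\overline\varphi$ replaced by a second solution of \eqref{u}.
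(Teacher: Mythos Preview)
Your argument for the convergence of $R_u$ is correct and rests on the same conserved quantity as the paper's proof: the Wronskian $W_{a',a'}(u_1,u_2)$ rewritten through Pr\"ufer variables as (a constant multiple of) $\im\bigl(\overline{Z_{u_1}(n)}\,Z_{u_2}(n)\bigr)$. The organization, however, is genuinely different. The paper first subtracts the two series to see that $R_{u_1}/R_{u_2}$ and $\eta_{u_1}-\eta_{u_2}$ converge, then uses the \emph{uniform} convergence hypothesis to find $n_0$ with tail bounded by $\pi/8$, and finally chooses $u_2$ with $Z_{u_2}(n_0)=iZ_{u_1}(n_0)$ so that the limiting phase difference lies in $(\pi/4,3\pi/4)+2\pi\Z$ and hence has nonzero sine. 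Your route instead factors $Z_u(N+1)=P(N)\bigl(W_u+o(1)\bigr)$, observes that $u\mapsto W_u$ is an $\R$-linear bijection onto $\C$ (as a pointwise limit of the linear maps $u\mapsto Z_u(N+1)/P(N)$, injective since $W_u=Z_u(1)e^{S_u}$), and then simply \emph{chooses} $u_1,u_2$ with $\im(\overline{W_{u_1}}W_{u_2})\neq 0$. This is cleaner, and it is worth noting that your argument for $R_u\to r_u>0$ never actually invokes the uniformity in the hypothesis---pointwise convergence for every $u$ suffices---whereas the paper's phase-tracking step does use uniformity.

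On the ``no subordinate solution'' clause you are right to be cautious. Your argument there requires $\varphi$ bounded (together with $\sup_n(a_n+a_n')<\infty$), which is not part of the lemma's stated hypotheses. The paper's own proof simply stops after establishing that $R_{u_1}(n)$ has a nonzero limit and does not supply a separate argument for non-subordinacy; in the CMV analogue it asserts without proof that $R_{u_1}/R_{u_2}\to$ positive limit already rules out subordinacy. In the paper's applications $\varphi$ is always a bounded Floquet solution of a periodic background, so the issue is moot there; but as a freestanding lemma the non-subordinacy conclusion does seem to need exactly the extra input you flag, and your derivation of $\lVert u\rVert_N\asymp\sqrt{N}$ from bounded $\varphi$ plus constancy of $W_{a',a'}(u,w)$ is a legitimate way to close that gap.
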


\begin{proof}
Let us consider two solutions $u_1(n)$, $u_2(n)$ of \eqref{u}. Subtracting \eqref{ml01} for the two solutions, we conclude that the series
\begin{align*}
\sum_{n=1}^\infty \left( \log \frac{Z_{u_1}(n+1)}{Z_{u_1}(n)}-\log \frac{Z_{u_2}(n+1)}{Z_{u_2}(n)} \right)
\end{align*}
is convergent. In particular, taking real and imaginary parts, we see that the sequences
\[
\log \frac{R_{u_1}(n)}{R_{u_2}(n)}, \qquad  \eta_{u_1}(n) - \eta_{u_2}(n)
\]
converge as $n\to \infty$.

By uniform convergence, there is an $n_0$ such that for all solutions $u$,
\[
\left\lvert \sum_{n=n_0+1}^\infty  \left( \log \frac{Z_u(n+1)}{Z_u(n)}  - A(n) \right)  \right\rvert \le \frac \pi 8.
\]
Taking imaginary parts and subtracting this for $u_1, u_2$,
\[
\left\lvert \sum_{n=n_0+1}^\infty \left( (\eta_{u_1}(n+1) - \eta_{u_1}(n)) - (\eta_{u_2}(n+1) - \eta_{u_2}(n)) \right) \right\rvert \le \frac \pi 4.
\]
Thus,
\[
\left\lvert \lim_{n\to \infty} (\eta_{u_1}(n) - \eta_{u_2}(n)) - ( \eta_{u_1}(n_0) - \eta_{u_2}(n_0) ) \right\rvert \le \frac \pi 4.
\]
In particular, if we were to pick the solution $u_1$ arbitrarily and pick the solution $u_2$ so that
\[
Z_{u_2}(n_0) = i Z_{u_1}(n_0),
\]
then we would have $\eta_{u_2}(n_0) - \eta_{u_1}(n_0) \in \frac \pi 2 + 2\pi \mathbb{Z}$ so
\[
\lim_{n\to\infty} (\eta_{u_2}(n) - \eta_{u_1}(n))  \in \left( \frac \pi 4, \frac {3\pi}4 \right) + 2\pi \mathbb{Z}.
\]

 We consider now the Wronskian of $u_1$ and $u_2$.
 \begin{align*}
& W_{a',a'}(u_1,u_2)(n-1)\\
 =&(a_n+a_n')[u_1(n-1)u_2(n)-u_1(n)u_2(n-1)]\\
 =&a_n\vert \varphi(n)\varphi(n-1)\vert R_{u_2}(n)R_{u_1}(n)[\sin(\eta_{u_2}(n)+\gamma(n))\sin(\eta_{u_1}(n)+\gamma(n-1))\\
 &-\sin(\eta_{u_1}(n)+\gamma(n))\sin(\eta_{u_2}(n)+\gamma(n-1))]\\
  =&a_n\vert \varphi(n)\varphi(n-1)\vert R_{u_2}(n)R_{u_1}(n)\sin (\eta_{u_1}(n)-\eta_{u_2}(n))\sin(\gamma(n)-\gamma(n-1)) \\
    =&\frac{i\omega}{2} R_{u_2}(n)R_{u_1}(n)\sin (\eta_{u_1}(n)-\eta_{u_2}(n))
 \end{align*}
 The Wronskian is nonzero and independent of $n$, but we know from the above that the quantities
 \[
 \frac{R_{u_1}(n)}{R_{u_2}(n)}, \qquad \sin(\eta_{u_1}(n) - \eta_{u_2}(n)), 
 \]
have nonzero limits as $n \to \infty$. From our final formula for the Wronskian, it then follows that $R_{u_1}^2(n)$, and then $R_{u_1}(n)$ has a nonzero limit as $n\to \infty$.
\end{proof}
\end{section}

\begin{section}{Pr\"ufer variables for the Szeg\H o recursion}\label{PruferSzego}
We begin by noting that $C^2 = I$ and
\[
Cz^{-1/2}A(\alpha,z)C=z^{-1/2}A(\alpha,z)
\]
so, since $v(n)$ is a solution for (\ref{OPUC.tmatrix}), so is $v^*(n)$. Analogously, denote $u^*(n) = Cu(n)$. The condition \eqref{u0} ensures that $u^*(0)=Cu(0)=u(0)$, so $u^*(n) = u(n)$ for all $n$.

Let us assume that $v(0), v^*(0)$ are linearly independent. Then $v(n), v^*(n)$ are linearly independent for any $n$. Thus, there exist complex numbers $Z(n),\mathfrak s(n)$ such that
\begin{equation}
u(n)=Z(n) v(n)+\mathfrak s(n) v^*(n).
\end{equation}
Applying $C$ to both sides of this equation we get
\[
u^*(n)=\overline{Z(n)} v^{*}(n)+\overline{\mathfrak s(n)} v(n).
\]
This implies that $\mathfrak s(n)=\overline{Z(n)}$ and proves \eqref{uvequation}.

Given $f,g$ two sequences in $\mathbb C^2$, we define their Wronskian as 
\begin{equation}
W(f,g)(n)=  f_2(n)g_1(n)- g_2(n)f_1(n).
\end{equation}
Our first order of business is to verify constancy of the Wronskian.

\begin{prop}
For $f, g$ two solutions of the Szeg\H o recursion with the same sequence $\alpha_n$, and $n$ a positive integer, $W(f,g)(n)=W(f,g)(n-1)$.
\end{prop}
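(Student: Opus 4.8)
The plan is to compute $W(f,g)(n)$ directly from the Szeg\H o recursion and show the claimed equality reduces to a purely algebraic identity about the transfer matrix $A(\alpha_n,z)$. Writing $f(n+1) = z^{-1/2} A(\alpha_n,z) f(n)$ and $g(n+1) = z^{-1/2} A(\alpha_n,z) g(n)$, the Wronskian $W(f,g)(n+1) = f_2(n+1) g_1(n+1) - g_2(n+1) f_1(n+1)$ can be expanded in terms of the entries $f_1(n), f_2(n), g_1(n), g_2(n)$. A convenient way to organize this: if $M = \begin{pmatrix} m_{11} & m_{12} \\ m_{21} & m_{22} \end{pmatrix}$ and $\tilde f = Mf$, $\tilde g = Mg$, then $\tilde f_2 \tilde g_1 - \tilde g_2 \tilde f_1 = (\det M)(f_2 g_1 - g_2 f_1)$; this is the standard fact that the bilinear form $(f,g) \mapsto f_2 g_1 - g_2 f_1$ transforms by the determinant. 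Here the relevant matrix is $M = z^{-1/2} A(\alpha_n, z)$, so I need $\det M$.

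The key computation is thus $\det\big(z^{-1/2} A(\alpha_n,z)\big)$. From \eqref{Aalphaz}, $\det A(\alpha,z) = \frac{1}{1-\lvert\alpha\rvert^2}\big(z \cdot 1 - (-\overline\alpha)(-\alpha z)\big) = \frac{1}{1-\lvert\alpha\rvert^2}\big(z - \lvert\alpha\rvert^2 z\big) = z$. Therefore $\det\big(z^{-1/2}A(\alpha_n,z)\big) = (z^{-1/2})^2 \det A(\alpha_n,z) = z^{-1} \cdot z = 1$. Hence $W(f,g)(n+1) = 1 \cdot W(f,g)(n) = W(f,g)(n)$, which is exactly the assertion (after shifting the index). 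The square root ambiguity mentioned in the text is harmless here precisely because it enters squared in the determinant, giving the unambiguous factor $z^{-1}$.

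I do not anticipate a serious obstacle: the only thing to be careful about is the transformation law for the Wronskian under the linear map, i.e.\ verifying that $(Mf)_2 (Mg)_1 - (Mg)_2 (Mf)_1 = (\det M)(f_2 g_1 - g_2 f_1)$ holds with the correct sign for the specific convention $W(f,g) = f_2 g_1 - g_2 f_1$ used in the paper (as opposed to the more common $f_1 g_2 - f_2 g_1$). This is a one-line check by direct expansion, and it produces $\det M$ rather than $-\det M$, so no sign issue arises. If one prefers to avoid invoking the transformation law abstractly, one can simply substitute the explicit entries of $A(\alpha_n,z)$ and expand $W(f,g)(n+1)$ by hand; the cross terms cancel and the coefficient of $W(f,g)(n)$ collapses to $\frac{z}{z(1-\lvert\alpha_n\rvert^2)} \cdot (1-\lvert\alpha_n\rvert^2) = 1$. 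Either route is short and the statement follows immediately.
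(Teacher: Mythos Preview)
Your proof is correct and is essentially the same as the paper's: the paper packages the argument by setting $M(n) = \begin{pmatrix} g_1(n) & f_1(n) \\ g_2(n) & f_2(n) \end{pmatrix}$, noting $W(f,g)(n) = \det M(n)$, and using $M(n) = z^{-1/2} A(\alpha_{n-1},z) M(n-1)$ together with multiplicativity of the determinant. Your explicit computation $\det\bigl(z^{-1/2} A(\alpha,z)\bigr) = z^{-1}\cdot z = 1$ is exactly the step the paper leaves implicit.
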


\begin{proof}
Let us write
\[M(n)=
\begin{pmatrix}
g_1(n)&f_1(n)\\
g_2(n)& f_2(n)
\end{pmatrix}
\]
and note that $W(f,g)(n) = \det M(n)$. Then $M(n) = z^{-1/2} A(\alpha_{n-1},z) M(n-1)$ implies that $\det M(n) = \det M(n-1)$.
\end{proof}

Let us write $\omega$ as the Wronskian of $v$ and $v^*$. Note that $\omega$ will be a nonzero real constant, due to the assumption that $v, v^*$ are linearly independent. We can thus write
\begin{equation}\label{OPUC.Wronskian}
\omega=W(v,v^*)= v_2(n)(v^*)_1(n)- (v^*)_2(n)v_1(n)=\vert v_2(n)\vert^2-\vert v_1(n)\vert^2.
\end{equation}
Since $v$, $v^*$ are solutions corresponding to the same sequence $\alpha_n$, this expression is $n$-independent.

From (\ref{uvequation}) we can write
\[
Z(n)=\frac{W(u,v^*)(n)}{\omega }
\]

Since $u$ and $v^*$ are solutions of Szeg\H o recursions corresponding to different sequences of Verblunsky coefficients, their Wronskian will not be $n$-independent. We denote
\begin{equation}
\rho'_n=\sqrt{1-\vert \alpha_n+\alpha'_n\vert^2}-\rho_n.
\end{equation}
and compute
\begin{align}
&W(u,v^*)(n+1)-W(u,v^*)(n)\nonumber\\
=& u_2(n+1)(v^*)_1(n+1)- (v^*)_2(n+1)u_1(n+1) - u_2(n)(v^*)_1(n) + (v^*)_2(n)u_1(n)\nonumber\\
=& \frac 1{\rho_n(\rho_n+\rho'_n)} \Bigl[ 
 (\bar\alpha_n \alpha_n' + \rho_n \rho_n') u_1(n) \overline{v_1(n)} - z \alpha_n' u_1(n) \overline{v_2(n)} \nonumber \\
& \qquad\qquad\qquad + \bar z \bar\alpha'_n u_2(n) \overline{v_1(n)} - (\alpha_n \bar\alpha_n' + \rho_n \rho'_n) u_2(n) \overline{v_2(n)} \Bigr]
\label{WronskianDifferenceOPUC}
\end{align}
The last line is obtained by using Szeg\H o recursion to substitute $u_1(n+1), u_2(n+1), (v^*)_1(n+1), (v^*)_2(n+1)$ and simplifying the resulting expression.

Dividing by $\omega Z(n)$ gives
\begin{align*}
\frac{Z(n+1) }{Z(n)} -1 = &  \frac 1{\omega\rho_n(\rho_n+\rho'_n) Z(n)} \Bigl[ 
 (\bar\alpha_n \alpha_n' + \rho_n \rho_n') u_1(n) \overline{v_1(n)} - z \alpha_n' u_1(n) \overline{v_2(n)} \nonumber \\
& \qquad\qquad\qquad\qquad + \bar z \bar\alpha'_n u_2(n) \overline{v_1(n)} - (\alpha_n \bar\alpha_n' + \rho_n \rho'_n) u_2(n) \overline{v_2(n)} \Bigr]
\end{align*}
From (\ref{uvequation}),
\begin{align}
\frac{u_1(n)}{Z(n)}=& \frac {Z(n)v_1(n)+\overline{Z(n)} (v^*)_1(n)}{Z(n)}  = v_1(n) + e^{-2i\eta(n)} \overline{v_2(n)} \label{OPUCu1}
\end{align}
and, similarly,
\begin{align}
\frac{u_2(n)}{Z(n)} =&\frac {Z(n)v_2(n)+\overline{Z(n)} (v^*)_2(n)}{Z(n)} = v_2(n) + e^{-2i\eta(n)} \overline{v_1(n)}. \label{OPUCu2}
\end{align}
Plugging these into the previous formula proves the following theorem.

\begin{thm}\label{OPUCPrufer}
\begin{align}
\frac{Z(n+1) }{Z(n)} -1 = &  \frac 1{\omega\rho_n(\rho_n+\rho'_n)} \Bigl[ 
 (\bar\alpha_n \alpha_n' + \rho_n \rho_n') \lvert v_1(n)\rvert^2  - z \alpha_n' v_1(n) \overline{v_2(n)} \nonumber \\
& \qquad\qquad\qquad\qquad + \bar z \bar\alpha'_n v_2(n) \overline{v_1(n)} - (\alpha_n \bar\alpha_n' + \rho_n \rho'_n) \lvert v_2(n) \rvert^2 \nonumber \\
& \qquad\qquad\qquad\qquad + e^{-2i\eta(n)} \Bigl(   (\alpha_n \bar\alpha_n'  - \bar\alpha_n \alpha_n') \overline{v_2(n)}\overline{v_1(n)} \nonumber \\
& \qquad\qquad\qquad\qquad + z \alpha_n' \overline{v_2(n)}^2 - \bar z \bar\alpha'_n \overline{v_1(n)}^2  \Bigr)\Bigr] \label{recursionCMV}
\end{align}

\begin{remark}
For $\alpha_n \equiv 0$, $u_0 = \begin{pmatrix} 1 \\ 1 \end{pmatrix}$, $v_n = \begin{pmatrix} z^{n/2} \\ 0 \end{pmatrix}$, this reduces to the Pr\"ufer variables in \cite[Section 10.12]{SimonOPUC2}.
\end{remark}
\end{thm}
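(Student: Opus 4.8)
The plan is to carry out the substitution flagged in the sentence preceding the theorem; the structural part of the argument is already in place, so what remains is organized algebra. Recall that \eqref{uvequation} together with constancy of $W(v,v^*)$ gives $Z(n)=W(u,v^*)(n)/\omega$, with $\omega=W(v,v^*)$ a nonzero real constant, and that the telescoped difference $W(u,v^*)(n+1)-W(u,v^*)(n)$ is the bilinear expression \eqref{WronskianDifferenceOPUC} in $u_1(n),u_2(n)$ and $\overline{v_1(n)},\overline{v_2(n)}$, with prefactor $1/(\rho_n(\rho_n+\rho'_n))$. Dividing by $\omega Z(n)$ yields the intermediate formula displayed just above the theorem, in which $u_1(n)$ and $u_2(n)$ enter only through the ratios $u_1(n)/Z(n)$ and $u_2(n)/Z(n)$. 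The goal is then to eliminate $u$ in favour of $v$ and the Pr\"ufer phase $\eta(n)$.

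First I would substitute the identities \eqref{OPUCu1} and \eqref{OPUCu2}, $u_1(n)/Z(n)=v_1(n)+e^{-2i\eta(n)}\overline{v_2(n)}$ and $u_2(n)/Z(n)=v_2(n)+e^{-2i\eta(n)}\overline{v_1(n)}$; these are immediate from \eqref{uvequation}, from $(v^*)_1(n)=\overline{v_2(n)}$ and $(v^*)_2(n)=\overline{v_1(n)}$, and from $\overline{Z(n)}/Z(n)=e^{-2i\eta(n)}$. Each of the four terms in the bracket then splits into a phase-free contribution and one carrying the factor $e^{-2i\eta(n)}$. Collecting the phase-free contributions reproduces the first four terms of \eqref{recursionCMV} verbatim, with $\lvert v_1(n)\rvert^2$, $v_1(n)\overline{v_2(n)}$, $v_2(n)\overline{v_1(n)}$, $\lvert v_2(n)\rvert^2$ in place of the products of $u_j/Z$ with the conjugated $v_k$. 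Collecting the $e^{-2i\eta(n)}$ contributions, the coefficient of $\overline{v_1(n)}\,\overline{v_2(n)}$ arises from the first and fourth bracket terms, and since $\rho_n\rho_n'$ is real those $\rho_n\rho_n'$ pieces cancel, leaving the purely imaginary combination of $\alpha_n,\alpha_n'$ appearing in \eqref{recursionCMV}; together with the surviving $\overline{v_2(n)}^2$ and $\overline{v_1(n)}^2$ contributions this is precisely the last bracket of \eqref{recursionCMV}.

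The only genuinely delicate step is the one already performed in the excerpt, namely the passage to \eqref{WronskianDifferenceOPUC}: one substitutes the two Szeg\H o recursions — $u$ with Verblunsky coefficients $\alpha_n+\alpha_n'$, and $v^*$ with coefficients $\alpha_n$ — into $W(u,v^*)(n+1)=u_2(n+1)(v^*)_1(n+1)-(v^*)_2(n+1)u_1(n+1)$, clears the two factors $\rho_n$ and $\rho_n+\rho_n'$ from the denominators, and checks that the terms reconstituting $W(u,v^*)(n)$ cancel; the care here is entirely in tracking the placement of $z$, $\bar z$ and of the various complex conjugates. Granting that, the reduction to \eqref{recursionCMV} is bookkeeping, the only point to watch being which monomials acquire the $e^{-2i\eta(n)}$ factor under \eqref{OPUCu1}--\eqref{OPUCu2}. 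Finally, the remark is verified by direct substitution of the free data: for $\alpha_n\equiv 0$ one has $\rho_n=1$, and for $v(n)=(z^{n/2},0)^{\mathrm T}$ one has $\lvert v_1(n)\rvert^2=1$, $v_2(n)=0$, and $\omega=\lvert v_2(n)\rvert^2-\lvert v_1(n)\rvert^2=-1$, so every term of \eqref{recursionCMV} involving $v_2(n)$ or $\alpha_n$ drops out, the recursion collapses to a single surviving term, and one matches the result against \cite[Section~10.12]{SimonOPUC2}.
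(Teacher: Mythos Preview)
Your proposal is correct and follows essentially the same route as the paper: the paper derives the intermediate formula by computing $W(u,v^*)(n+1)-W(u,v^*)(n)$ via the two Szeg\H o recursions, divides by $\omega Z(n)$, and then substitutes \eqref{OPUCu1}--\eqref{OPUCu2}, remarking simply that ``plugging these into the previous formula proves the following theorem.'' Your write-up makes the final substitution more explicit (separating the phase-free and $e^{-2i\eta(n)}$ contributions and noting the cancellation of the $\rho_n\rho_n'$ pieces in the $\overline{v_1(n)}\,\overline{v_2(n)}$ coefficient), but the argument is the same.
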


\begin{lemma}\label{InitialConditionsTrickCMV}
Fix $z$ and a solution $v$ of \eqref{OPUC.tmatrix}. For different solutions $u$ of\eqref{OPUC.perturbed}, we denote the corresponding Pr\"ufer variables by $Z_u$, $R_u$, $\eta_u$. If there exists a sequence $\mathcal A(n)$ independent of $u$ such that the series
\begin{equation}\label{ml01CMV}
\sum_{n=1}^\infty \left( \log \frac{Z_u(n+1)}{Z_u(n)}  - \mathcal A(n) \right)
\end{equation}
converges uniformly in $u$, then $R_u(n)$ converges as $n\to\infty$ for any $u$ and there is no subordinate solution at $z$.
\end{lemma}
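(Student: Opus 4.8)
The plan is to follow the proof of Lemma~\ref{InitialConditionsTrick}, with the Jacobi Wronskian computation replaced by the Szeg\H o one. Fix two solutions $u_1,u_2$ of \eqref{OPUC.perturbed} with initial conditions of the form \eqref{u0} (so that their Pr\"ufer variables are defined), and assume $u_1\not\equiv 0$, so $Z_{u_1}(n)\neq 0$ for all $n$; the case $u_1\equiv 0$ is trivial. Subtracting \eqref{ml01CMV} for $u_1$ and $u_2$ shows that $\sum_{n=1}^\infty\bigl(\log\tfrac{Z_{u_1}(n+1)}{Z_{u_1}(n)}-\log\tfrac{Z_{u_2}(n+1)}{Z_{u_2}(n)}\bigr)$ converges; taking real and imaginary parts shows that $\log\tfrac{R_{u_1}(n)}{R_{u_2}(n)}$ and $\eta_{u_1}(n)-\eta_{u_2}(n)$ converge as $n\to\infty$. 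By the uniformity in \eqref{ml01CMV} one can fix $n_0$ so that the tail from $n_0+1$ has modulus at most $\tfrac\pi8$ for every $u$; subtracting imaginary parts for $u_1,u_2$ and telescoping gives $\bigl\lvert\lim_{n\to\infty}(\eta_{u_1}(n)-\eta_{u_2}(n))-(\eta_{u_1}(n_0)-\eta_{u_2}(n_0))\bigr\rvert\le\tfrac\pi4$.

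With $u_1$ arbitrary, I would then take $u_2$ to be the solution of \eqref{OPUC.perturbed} determined by $u_2(n_0)=iZ_{u_1}(n_0)\,v(n_0)+\overline{iZ_{u_1}(n_0)}\,v^*(n_0)$ and propagated by the perturbed Szeg\H o recursion in both directions. Since $C$ commutes with each $z^{-1/2}A(\alpha_n+\alpha'_n,z)$ and $Cu_2(n_0)=u_2(n_0)$ by inspection (recall $Cv=v^*$, $C^2=I$, and $C$ is antilinear), one has $u_2^*(n)=u_2(n)$ for all $n$; in particular $u_2(0)$ has the form \eqref{u0}, the Pr\"ufer variable $Z_{u_2}$ is well defined, and by uniqueness in \eqref{uvequation} we have $Z_{u_2}(n_0)=iZ_{u_1}(n_0)$, hence $\eta_{u_2}(n_0)-\eta_{u_1}(n_0)\in\tfrac\pi2+2\pi\mathbb Z$. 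Combined with the estimate above, $\lim_{n\to\infty}(\eta_{u_2}(n)-\eta_{u_1}(n))\in(\tfrac\pi4,\tfrac{3\pi}4)+2\pi\mathbb Z$, so $\sin(\eta_{u_1}(n)-\eta_{u_2}(n))$ converges to a nonzero limit.

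The crucial step is a Wronskian identity. Expanding $W(u_1,u_2)(n)$ via \eqref{uvequation}, using that $W$ is $\mathbb C$-bilinear and antisymmetric, and that $W(v,v)=W(v^*,v^*)=0$ and $W(v,v^*)=\omega$, one gets
\[
W(u_1,u_2)(n)=\bigl(Z_{u_1}(n)\overline{Z_{u_2}(n)}-\overline{Z_{u_1}(n)}Z_{u_2}(n)\bigr)\omega=2i\omega\,R_{u_1}(n)R_{u_2}(n)\sin\bigl(\eta_{u_1}(n)-\eta_{u_2}(n)\bigr).
\]
Since $u_1,u_2$ solve the same Szeg\H o recursion (with Verblunsky coefficients $\alpha_n+\alpha'_n$), $W(u_1,u_2)$ is independent of $n$ by the constancy of the Wronskian established above, and it is nonzero since it is nonzero at $n=n_0$. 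Letting $n\to\infty$, both $R_{u_1}(n)/R_{u_2}(n)$ and $\sin(\eta_{u_1}(n)-\eta_{u_2}(n))$ have nonzero limits, so $R_{u_1}(n)^2=R_{u_1}(n)R_{u_2}(n)\cdot\tfrac{R_{u_1}(n)}{R_{u_2}(n)}$ converges to a nonzero limit; hence $R_{u_1}(n)$ converges to a positive limit. As $u_1$ was arbitrary, every $R_u$ converges to a positive limit, and the nonexistence of a subordinate solution at $z$ follows exactly as in the Jacobi case.

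I expect the only real care to be needed in checking that the auxiliary solution $u_2$ genuinely has initial data of the form \eqref{u0} (so that $Z_{u_2}$ exists), and in the bookkeeping of the bilinear Wronskian expansion together with the harmless choice-of-branch issues for the logarithm in the telescoping series; given Theorem~\ref{OPUCPrufer} and the constancy of the Wronskian, there should be no essential difficulty beyond these.
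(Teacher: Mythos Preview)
Your proof is correct and follows essentially the same approach as the paper's: the paper computes the same Wronskian identity $W(u_1,u_2)(n)=2i\omega R_{u_1}(n)R_{u_2}(n)\sin(\eta_{u_1}(n)-\eta_{u_2}(n))$ and then simply refers back to the argument of Lemma~\ref{InitialConditionsTrick} for the rest. You have in fact filled in more detail than the paper does, in particular the verification that the auxiliary solution $u_2$ has initial data of the form \eqref{u0}.
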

  
\begin{proof}
For two solutions $u_1, u_2$, we consider the Wronskian of $u_1$ and $u_2$. Using linearity of the Wronskian and \eqref{uvequation} for $u_1, u_2$, we get
\begin{align*}
W(u_1,u_2)(n) = & Z_{u_1}(n) Z_{u_2}(n) W(v,v)(n) + Z_{u_1}(n) \overline{Z_{u_2}(n)} W(v,v^*)(n) \\
&  + \overline{Z_{u_1}(n)} Z_{u_2}(n) W(v^*,v)(n) + \overline{Z_{u_1}(n) Z_{u_2}(n)} W(v^*,v^*)(n) \\
= &  (Z_{u_1}(n) \overline{Z_{u_2}(n)} - \overline{Z_{u_1}(n)} Z_{u_2}(n)) \omega \\
=&2i \omega R_{u_1}(n)R_{u_2}(n) \sin(\eta_{u_1}(n)-\eta_{u_2}(n))
\end{align*}
The Wronskian is nonzero and independent of $n$.
 
 As in the proof of Lemma~\ref{InitialConditionsTrick}, we conclude that for any solution $u_1$ we can find a solution $u_2$ such that the sequences
\[
\frac{R_{u_1}(n)}{R_{u_2}(n)}, \qquad  \sin(\eta_{u_1}(n) - \eta_{u_2}(n))
\]
converge to nonzero limits as $n\to \infty$. From our final formula for the Wronskian, it then follows that $R_{u_1}^2(n)$, and then $R_{u_1}(n)$ has a nonzero limit as $n\to \infty$.

Since $R_{u_1}(n) / R_{u_2}(n)$ has a nonzero limit, there are no subordinate solutions.
\end{proof}

\begin{remark}\label{CMVinitcond}
While we consider solutions $u$ with initial conditions of the form \eqref{u0} which are convenient for our analysis, standard references \cite{SimonOPUC1,SimonOPUC2} single out solutions $\tilde u$ with initial conditions of the form
\[
\tilde u(0) = \begin{pmatrix} 1 \\ \lambda \end{pmatrix}, \qquad \lambda \in \partial \mathbb{D}.
\]
By linearity of the Szeg\H o recursion, with $\kappa = \sqrt \lambda$, we have $\tilde u(0) = \bar\kappa u(0)$ so $\tilde u(n) = \bar\kappa u(n)$. Thus, boundedness of solutions, subordinacy, and similar conclusions about the asymptotics carry over immediately from one family of solutions to the other.
\end{remark}

As the final topic of this section, we explain how the current setup pertains to eigensolutions of the CMV matrix.

\begin{remark}\label{GZremark}
As proved in \cite{GZ}, for any eigensolution $U$ of the CMV matrix $\mathcal{C}$, there is an eigensolution $V$ of $\mathcal{C}^T$,
\[
\mathcal{C} U = z U, \quad \mathcal{C}^T V = z V,
\]
such that
\[
\begin{pmatrix} U_n \\ V_n \end{pmatrix}
= R_n(\alpha_n,z) \begin{pmatrix} U_{n-1} \\ V_{n-1} \end{pmatrix}
\]
for all $n$, where
\[
R_n(\alpha_n,z) =\begin{cases}
\frac{1}{\rho_n}
\begin{pmatrix}
-\overline{\alpha_n} & z \\
z^{-1} & - \alpha_n
\end{pmatrix}, & n\text{ odd}, \\
\frac{1}{\rho_n}
\begin{pmatrix}
-\alpha_n & 1 \\
1 & - \overline{\alpha_n}
\end{pmatrix}, & n\text{ even}.
\end{cases}
\]
It was observed in \cite{DFLY} that
\begin{equation}\label{DFLYrelation}
z^{-1} A(\alpha_{2k+1},z) A(\alpha_{2k},z)
=
D(z) R_{2k+1}(\alpha_{2k+1},z) R_{2k}(\alpha_{2k},z) D\left(z\right)^{-1},
\end{equation}
where
\[
D(z) = \begin{pmatrix}
1 & 0 \\
0 & z
\end{pmatrix}.
\]
\eqref{DFLYrelation} relates Szeg\H o recursion with eigensolutions of $\mathcal{C}$. For example, if at some $z$, all solutions $u$ of Szeg\H o recursion are bounded, then iterating \eqref{DFLYrelation} implies that the sequences $(U_n, V_n)^T$ are bounded, so the eigensolutions $U_n$ are bounded.
\end{remark}
\end{section}

\begin{section}{Random decaying perturbations of Jacobi and CMV matrices} \label{SrandomJacobi}
\label{SectionRandom}

\begin{proof}[Proof of Theorem \ref{TrandomJacobi}]
For any $E \in (u,v)$, let us pick the solution $\varphi$ of \eqref{varphi} with $\varphi(0)=1$, $a_1\varphi(1)=i$. This is a somewhat arbitrary choice, but it ensures that $\omega = 2$ is bounded away from $0$. Together with
\[
\begin{pmatrix} 
a_{n+1} \varphi(n+1) \\ a_n \varphi(n) 
\end{pmatrix}
= T_n^E \begin{pmatrix} 
a_1 \varphi(1) \\ \varphi(0) 
\end{pmatrix},
\]
this implies that $\varphi$ is bounded for $E \in (u,v)$ and, moreover, if we define
\begin{equation}\label{PhiE}
\Phi(E) = \frac{\lVert \varphi\rVert_\infty^2}{\lvert\omega\rvert},
\end{equation}
then for any $\epsilon >0$,
\begin{equation}\label{PhiEbdd}
\sup_{E \in (u+\epsilon, v- \epsilon)} \Phi(E) < \infty.
\end{equation}
Denoting by $\tilde T_n^E$ transfer matrices for the perturbed potential, we wish to use the following criterion of Last--Simon \cite{LastSimon99}: if 
\begin{equation}\label{LastSimonCriterion}
\liminf_{n\to\infty}  \int_{u+\epsilon}^{v-\epsilon} \lVert \tilde T_n^E \rVert^4 dE < \infty,
\end{equation}
then $\tilde{\mathcal{J}}$ has purely a.c.\ spectrum on $(u+\epsilon, v-\epsilon)$ and $(u+\epsilon, v-\epsilon)$ is in the essential support of the a.c.\ spectrum. In \cite{LastSimon99} this is proved for discrete Schr\"odinger operators, but as remarked elsewhere (e.g. in \cite{KaluzhnyLast07}), the statement extends to the current setting essentially by the same proof.

Let us fix an initial condition $Z(1)$ independent of $E$. We will first show that, almost surely,
\begin{equation}\label{R4intfinite}
\liminf_{n\to\infty} \int_{u+\epsilon}^{v-\epsilon} R_n(E)^4 dE < \infty. 
\end{equation}
This will be done by applying the simple argument of \cite[Theorem 8.1]{KiselevLastSimon98} to our generalized Pr\"ufer variables.

From Theorem \ref{t.OPRL}, we have
\begin{align} R(n+1)^4  = & {R(n)}^4 \biggl(  1- 4 \Re \Bigl( \frac{i}{\omega} b_{n+1}'\vert \varphi(n)\vert^2 e^{-2i(\eta(n)+\gamma(n))}  \nonumber \\
&+\frac{i}{\omega}  a_{n}'\vert \varphi(n-1)\vert\cdot \vert \varphi(n)\vert e^{i(\gamma(n-1)-\gamma(n))}\nonumber \\
&-\frac{i}{\omega}  a_{n}'\vert \varphi(n-1)\vert\cdot \vert \varphi(n)\vert e^{-2i\eta(n)}e^{-i(\gamma(n-1)+\gamma(n))}\nonumber \\
&+\frac{i}{\omega} a_{n}' (1-e^{-2i(\eta(n)+\gamma(n))})\vert \varphi(n-1)\vert\cdot \vert \varphi(n)\vert e^{-i(\gamma(n-1)-\gamma(n))} \Bigr)\nonumber  \\
& + O\left( \Phi(E) ( \lvert a_n' \rvert^2 + \lvert b_{n+1}'\rvert^2)\right) \biggr). \label{R4}
\end{align}
Since random variables $R(n), \eta(n)$ depend only on $a'_1, \dots, a'_{n-1}, b'_1, \dots, b'_n$, they are independent from the random variables $a'_n, b'_{n+1}$, so the first order terms in $a_n'$ or $b'_{n+1}$ have zero expectation; for example, with $X(n) = R(n)^4  \frac{i}{\omega} \vert \varphi(n)\vert^2 e^{-2i(\eta(n)+\gamma(n))}$,
\[
\mathbb{E} \left( \Re \left( b'_{n+1} X(n) \right) \right) = \mathbb{E}(b'_{n+1}) \mathbb{E} \left( \Re \left( X(n) \right) \right)  = 0.
\]
Thus, \eqref{R4} implies
\[
\mathbb{E} \left( R(n+1)^4 \right) \le \left(1 + C \Phi(E) \mathbb{E} ( \lvert a_n' \rvert^2 + \lvert b_{n+1}'\rvert^2) \right) \mathbb{E} \left( R(n)^4 \right).
\]
Iterating this inequality and using \eqref{L2randomcondition} and \eqref{PhiEbdd} implies
\[
\sup_{n\in\mathbb{N}} \mathbb{E} \left( \int_{u+\epsilon}^{v-\epsilon}  R(n)^4 dE \right) < \infty.
\]
Therefore, by Fatou's lemma, \eqref{R4intfinite} holds almost surely.

We go from here to \eqref{LastSimonCriterion} in the standard way: we pick two linearly independent solutions $u_1, u_2$ corresponding to two initial conditions,
\[
\begin{pmatrix}
(a_1 + a_1') u_1(1) & (a_1 + a_1') u_2(1) \\
u_1(0) & u_2(0) 
\end{pmatrix}
=
\begin{pmatrix}
1 & 0 \\
0 & 1
\end{pmatrix}.
\]
The corresponding Pr\"ufer variables have initial conditions $Z_1(1) =1$ and $Z_2(1) = i$. Using \eqref{rho2}, \eqref{R4intfinite} implies that almost surely,
\[
\liminf_{n\to\infty}  \int_{u+\epsilon}^{v-\epsilon} \left\lVert \begin{pmatrix} (a_{n+1} + a'_{n+1}) u_j(n+1) \\ u_j(n) \end{pmatrix} \right\rVert^4 dE < \infty,
\]
for $j=1,2$. Now
\[
\lVert T_n^E  \rVert^2 \le \left\lVert T_n^E \begin{pmatrix} 1 \\ 0 \end{pmatrix} \right\rVert^2  + \left\lVert T_n^E \begin{pmatrix} 0 \\ 1 \end{pmatrix} \right\rVert^2
\]
implies \eqref{LastSimonCriterion}. As explained above, this implies that almost surely, on $(u+\epsilon, v-\epsilon)$, the spectral measure is mutually absolutely continuous with Lebesgue measure. Applying this conclusion to a countable sequence of $\epsilon \to 0$, say $\epsilon_n = n^{-1}$, concludes the proof.
\end{proof}

\begin{proof}[Sketch of proof for Theorem \ref{TrandomCMV}]
The proof is almost identical to that of the previous theorem. The analogue of \eqref{LastSimonCriterion} for CMV operators can be found as Theorem 10.7.5 of \cite{SimonOPUC2} (taking into account Remark 1 immediately after the theorem statement).
Starting from \eqref{recursionCMV}, and using algebraic manipulations such as
\begin{align}
\frac {\rho_n + \rho'_n}{\rho_n} & = \left( 1 - \frac{\bar \alpha_n \alpha'_n + \alpha_n \bar\alpha'_n + \alpha'_n \bar \alpha'_n}{\rho_n^2} \right)^{-1/2} \nonumber \\
& = \sum_{k=0}^\infty \binom{-1/2}{k} (-1)^k \frac{\left(\bar \alpha_n \alpha'_n + \alpha_n \bar\alpha'_n + \alpha'_n \bar \alpha'_n\right)^k}{\rho_n^{2k}}, \label{rho'toalpha'}
\end{align}
we can obtain an analogue of \eqref{R4} by considering terms that are $O(\alpha_n'^2)$ or linear in $\alpha_n'$.
\end{proof}
\end{section}

\begin{section}{Decaying oscillatory perturbations of periodic Jacobi and CMV matrices}\label{SectionDecaying}

In this section, we prove Theorems~\ref{maintheorem} and \ref{maintheoremCMV}.

Let $E$ lie in the interior of a band of the spectrum of the periodic Jacobi matrix $J$, and let $\varphi$ be a Floquet solution at $E$. Then we can write $\gamma(n)=\varpi(n)+kn$ where $k$ is the quasimomentum and $\vert \varphi(n)\vert$ and $\varpi(n)$ are $q$-periodic. Note that $\varphi$ is complex because $\pm k\in (0,\pi/q)$. Also note that the function $\Phi(E)$ defined by \eqref{PhiE} is a continuous function of $E$ on the interior of any band, so if we work on a compact interval $I$ in the interior of a band, then
\begin{equation}\label{Phiupperbound}
\tilde \Phi(I) = \sup_{E\in I} \Phi(E) < \infty.
\end{equation}

Let us make another preliminary remark. Since $a_n', b_n'$, we may take the average between the representation \eqref{1.13} and its complex conjugate with no change to the assumptions of the theorem. Thus, we may assume that for every term in \eqref{1.13}, there is another term which is precisely its complex conjugate.

We now rewrite the result of Theorem~\ref{t.OPRL} in a convenient form.
\begin{align} &\frac{Z(n+1)}{Z(n)}\nonumber\\
=&1-\frac{i}{\omega} \frac{a_n}{a_n+a_n'} b_{n+1}'\vert \varphi(n)\vert^2 (e^{-2i\eta(n)}e^{-2i\varpi(n)-2ikn} -1)\nonumber\\
&+\frac{i}{\omega}  a_{n}'\vert \varphi(n-1)\vert\cdot \vert \varphi(n)\vert e^{i(\varpi(n-1)-\varpi(n)-k)}\nonumber\\
&-\frac{i}{\omega}  a_{n}'\vert \varphi(n-1)\vert\cdot \vert \varphi(n)\vert e^{-2i\eta(n)} e^{-i(\varpi(n)+\varpi(n-1)+(2k-1)n)}\nonumber\\
&+\frac{i}{\omega}\frac{a_n}{a_n+a_n'} a_{n}' (1-e^{-2i\eta(n)}e^{-2i\varpi(n)-2ikn})\vert \varphi(n-1)\vert\cdot \vert \varphi(n)\vert e^{i(\varpi(n)-\varpi(n-1)+k)} \label{r(n+1)/r(n)}
\end{align}

Note that coefficient stripping does not affect the conclusions of our theorem. Since we are working with a decaying perturbation, this means we can assume that for all $n$,
\[
\left\lvert \frac{a_n'}{a_n} \right\rvert \le \frac 12\quad \text{and} \quad \lvert a_n' \rvert , \lvert b_n' \rvert \le \frac 1{20} \tilde\Phi(I).
\]
(these conditions are trivially true for $n\ge n_0$; by coefficient stripping $n_0$ times, they become true for all $n$). These assumptions ensure that some of the Taylor expansions below are justified;  for instance, by \eqref{Phiupperbound}, they ensure that
\[
\left\lvert \frac{Z(n+1)}{Z(n)} - 1 \right\rvert \le \frac12
\]
so we can take the $\log$ of \eqref{r(n+1)/r(n)}, using the usual branch of $\log$ on $\{ z\in \mathbb{C} \mid \lvert z-1\rvert < 1\}$ with $\log 1=0$. Using the Taylor series of $\log$ to expand the right-hand side, and using the geometric series
\[
\frac{a_n}{a_n+a_n'}=\frac{1}{1+a_n'/a_n}=1-\frac{a'_n}{a_n}+\left(\frac{a'_n}{a_n}\right)^2-\ldots
\]
we can write
\begin{equation}\label{logZ}
\log \frac{Z(n+1)}{Z(n)} = P(n)  + Q(n),
\end{equation}
where $P(n)$ collects all terms with at most $p-1$ factors of $a_n'$ and $b'_{n+1}$,
\begin{equation}\label{ml50}
P(n) = \sum_{\substack{K,L\geq 0\\ 1 \le K+L \le p-1}}  \sum_{M=0}^{1} \zeta_{K,L,M}(n) {a_n'}^K {b_{n+1}'}^L  e^{-2iM(kn+ \eta(n))}
\end{equation}
and $\zeta_{K,L,M}(n)$ are $q$-periodic sequences which don't depend on $a'$ or $b'$. The remainder $Q(n)$ collects all terms with $p$ or more factors of $a'_n$ and $b'_{n+1}$, so  $Q(n) \in \ell^1$; $Q(n)$ will be merely an inconsequential remainder in what follows.

Using \eqref{1.13}, we expand \eqref{ml50} into a sum of terms of the form
\[
\xi_{M,l_1, \dots, l_{K+L}} (n) c_1 \cdots c_{K+L} e^{-i(\phi_{l_1} + \dots + \phi_{l_{K+L}})} \varsigma_n^{(l_1)} \dots \varsigma_n^{(l_{K+L})} e^{-2iM (k n + \eta(n))}
\]
where each $\xi_{M,l_1, \dots, l_{K+L}}$ is equal to $\zeta_{K,L,M}$ or to $0$. To abbreviate the expressions, for
\[
\lj = (l_1, \dots, l_{J}) \in \mathbb{N}^J,
\]
we denote
\begin{align*}
\varsigma_n^{(\lj)} & = \varsigma_n^{(l_1)} \dots \varsigma_n^{(l_{J})} \\
\phi_{\lj} & = \phi_{l_1} + \dots + \phi_{l_{J}} \\
c_{\lj} & = c_{l_1} \cdots  c_{l_{J}}
\end{align*}
and we can write \eqref{logZ} as
\begin{equation}\label{P}
\log \frac{Z(n+1)}{Z(n)} = \sum_{J = 1}^{p-1}\sum_{M=0}^1  \sum_{\lj \in \mathbb{N}^J} \xi_{M,\lj}(n) c_{\lj} e^{-i \phi_\lj n} \varsigma_n^{(\lj)} e^{-2i M (k n+ \eta(n))} + Q(n). 
\end{equation}
In the same fashion, denoting $Z(n+1) / Z(n) = 1+ w$, starting from (by \eqref{ZReta})
\[
e^{2iM(\eta(n) - \eta(n+1))} = \left( \frac {1+\bar w}{1+w} \right)^M,
\]
expanding $(1+w)^{-M}$ in powers of $w$, then expanding $w$ by using \eqref{1.13}, and collecting into $Q_M \in \ell^1$ all the products with at least $p$ factors, we obtain
\begin{equation}\label{eta2}
e^{2iM(\eta(n)-\eta(n+1))}= 1 - \sum_{J = 1}^{p-1}\sum_{m=-M}^J  \sum_{\lj \in \mathbb{N}^J} \omega_{M,m,\mathbf j}(n) c_\lj e^{-i \phi_\lj n} \varsigma_n^{(\lj)} e^{-2i m (k n+ \eta(n))} - Q_M(n).
\end{equation}

Note that the sum in $m$ goes only from $-M$, since positive powers of $e^{2i(kn+\eta(n))}$ can only come from $(1+\bar w)^M$, at most $M$ of them; note also that it doesn't go beyond $J$, since every factor of $e^{-2i(kn+\eta(n))}$ is accompanied by at least one $a_n'$ or $b_{n+1}'$.

It is obvious that the $\xi_{M,\lj}$ and $\omega_{M,m,\lj}$ are bounded, i.e.\ that for any fixed $M, m, J$, there is a constant $C$ depending only on $M, m, J$ such that
\begin{equation}\label{xiomegabounds}
\sup_{\lj \in \mathbb{N}^J} \lVert \xi_{M,\lj} \rVert_\infty \le C \Phi(E)^J, \quad \sup_{\lj \in \mathbb{N}^J} \lVert \omega_{M,m,\lj} \rVert_\infty \le C \Phi(E)^J
\end{equation}
since there are, in fact, only finitely many distinct sequences among the $\xi_{M,\lj}$ and $\omega_{M,m,\lj}$, and they are all $q$-periodic, and contain at most $J$ factors of $\lvert \varphi(\cdot) \varphi(\cdot) \rvert / \omega$.

\eqref{P} and \eqref{eta2} will be the key formulas in what follows. They both involve seemingly complicated sums, but note that both sums are (infinite) linear combinations of terms which are all of the same form: every term is a periodic factor multiplied by an oscillation and a sequence of bounded variation. We control such sequences using a discrete integration by parts to integrate the skew-periodic part and differentiate the bounded variation sequence.

To integrate the skew-periodic part, we use the following.

  \begin{lemma}\label{fglemma}
  Given a $q$-periodic sequence $f(n)$ and a real number $\kappa$ such that $\kappa q\notin 2\pi \mathbb{Z}$, there exists a unique $q$-periodic sequence $g(n)$ such that 
  \begin{equation}\label{fg.equation}
  e^{i\kappa n}g(n)-e^{i\kappa (n-1)}g(n-1)=e^{i\kappa n}f(n).
  \end{equation}
  We will denote  $\Lambda_\kappa (f):=g$.
Moreover,
\[
\lVert \Lambda_\kappa (f) \rVert_\infty \le \frac{ q \lVert f \rVert_\infty}{\lvert e^{i \kappa q} - 1 \rvert}.
\] 
  \end{lemma}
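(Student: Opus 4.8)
The plan is to solve the recursion \eqref{fg.equation} explicitly. First I would substitute $h(n) = e^{i\kappa n} g(n)$, so that the equation becomes the telescoping recursion $h(n) - h(n-1) = e^{i\kappa n} f(n)$. Summing from $1$ to $n$ gives $h(n) = h(0) + \sum_{m=1}^n e^{i\kappa m} f(m)$, so the only freedom is the choice of $h(0) = g(0)$; thus $g$ is determined once we pin down $g(0)$, and the real content is to show there is exactly one value of $g(0)$ making $g$ $q$-periodic.

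The key step is to impose $q$-periodicity. Since $f$ is $q$-periodic, one computes $h(n+q) - h(n) = \sum_{m=n+1}^{n+q} e^{i\kappa m} f(m) = e^{i \kappa n} \sum_{j=1}^{q} e^{i\kappa j} f(j)$; call the constant $S = \sum_{j=1}^q e^{i\kappa j} f(j)$, which is independent of $n$. Periodicity of $g$ requires $g(n+q) = g(n)$, i.e. $e^{-i\kappa(n+q)} h(n+q) = e^{-i\kappa n} h(n)$, equivalently $h(n+q) = e^{i\kappa q} h(n)$ for all $n$. Combining with $h(n+q) = h(n) + e^{i\kappa n} S$ gives $(e^{i\kappa q} - 1) h(n) = e^{i\kappa n} S$, and since $\kappa q \notin 2\pi\mathbb{Z}$ the factor $e^{i\kappa q} - 1$ is nonzero, so $h(n) = e^{i\kappa n} S / (e^{i\kappa q}-1)$, hence
\[
g(n) = \frac{S}{e^{i\kappa q} - 1} = \frac{1}{e^{i\kappa q}-1} \sum_{j=1}^q e^{i\kappa j} f(j).
\]
One then checks directly that this constant sequence (constant in $n$!) does solve \eqref{fg.equation}: indeed $e^{i\kappa n} g(n) - e^{i\kappa(n-1)} g(n-1) = (e^{i\kappa n} - e^{i\kappa(n-1)}) \cdot S/(e^{i\kappa q}-1)$, and a short manipulation using $\sum_{j=1}^q e^{i\kappa j} f(j) = e^{i\kappa} \sum_{j=0}^{q-1} e^{i\kappa j} f(j)$ together with $q$-periodicity of $f$ confirms this equals $e^{i\kappa n} f(n)$; alternatively, uniqueness of the periodic solution plus the fact that the above derivation was forced shows existence and uniqueness simultaneously. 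Actually the cleanest route is: existence and uniqueness of $g(0)$ follow from the forced equation $(e^{i\kappa q}-1) g(0) = S$, and then periodicity of the resulting $g$ follows because both $g(n)$ and $g(n+q)$ satisfy the same first-order recursion with the same initial value.

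For the norm bound, from $g(n) = S/(e^{i\kappa q}-1)$ (or, if one does not notice $g$ is constant, from the general formula $g(n) = e^{-i\kappa n} h(n)$ with the $h(n)$ above) we get $\lvert g(n) \rvert \le \lvert S \rvert / \lvert e^{i\kappa q} - 1 \rvert$, and the trivial estimate $\lvert S \rvert \le \sum_{j=1}^q \lvert f(j) \rvert \le q \lVert f \rVert_\infty$ yields $\lVert \Lambda_\kappa(f) \rVert_\infty \le q \lVert f \rVert_\infty / \lvert e^{i\kappa q} - 1\rvert$. I do not expect a genuine obstacle here; the only mild subtlety is bookkeeping the reindexing in the periodicity computation for $S$, and being careful that the hypothesis $\kappa q \notin 2\pi\mathbb{Z}$ is exactly what is needed to divide by $e^{i\kappa q} - 1$.
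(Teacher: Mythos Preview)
Your overall strategy---reduce to a first-order recursion for $h(n)=e^{i\kappa n}g(n)$, then pin down the initial value by imposing $q$-periodicity---is exactly the paper's, and your ``cleanest route'' paragraph (solve $(e^{i\kappa q}-1)g(0)=S$ for $g(0)$, then observe that $g(\cdot)$ and $g(\cdot+q)$ satisfy the same recursion with the same initial value) is a correct and complete argument for existence and uniqueness, essentially identical to what the paper does.

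However, there is a genuine error earlier that contaminates the norm bound. In the step
\[
\sum_{m=n+1}^{n+q} e^{i\kappa m} f(m) \;=\; e^{i\kappa n}\sum_{j=1}^q e^{i\kappa j} f(j),
\]
the substitution $m=n+j$ gives $e^{i\kappa n}\sum_{j=1}^q e^{i\kappa j} f(n+j)$, and $f(n+j)\neq f(j)$ in general (periodicity only says $f(n+j)=f(n+j-q)$, not $f(n+j)=f(j)$). So $h(n+q)-h(n)$ is \emph{not} $e^{i\kappa n}$ times an $n$-independent constant, and your conclusion that $g$ is constant is false: plug a constant $g$ back into \eqref{fg.equation} and you will find it forces $f$ to be constant as well. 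A concrete counterexample is $q=2$, $f(1)=1$, $f(2)=0$.

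Because of this, your norm estimate---which rests on $g(n)=S/(e^{i\kappa q}-1)$---does not stand. The fix is to use the correct explicit formula: for $m\in\{0,\dots,q-1\}$,
\[
e^{i\kappa m}g(m)=g(0)+\sum_{n=1}^m e^{i\kappa n}f(n)
=\frac{1}{e^{i\kappa q}-1}\Bigl(e^{i\kappa q}\sum_{n=1}^m e^{i\kappa n}f(n)+\sum_{n=m+1}^q e^{i\kappa n}f(n)\Bigr),
\]
obtained by inserting the value of $g(0)$. The numerator has exactly $q$ terms, each of modulus at most $\lVert f\rVert_\infty$, which gives the claimed bound $\lVert g\rVert_\infty\le q\lVert f\rVert_\infty/\lvert e^{i\kappa q}-1\rvert$. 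This is precisely how the paper proceeds.
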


  \begin{proof}
The sequence $g(n)$ is uniquely determined by \eqref{fg.equation} and by the value of $g(0)$ (and of course, $f$). Moreover, periodicity dictates that
\begin{equation}\label{g0gq}
g(0) = g(q) = e^{-i\kappa q} \left( g(0) + \sum_{n=1}^{q} e^{i\kappa n} f(n) \right)
\end{equation}
which gives the only possibility for $g(0)$,
\begin{equation}\label{g0}
g(0) = \frac{\sum_{n=1}^{q} e^{i\kappa n} f(n)}{e^{i\kappa q}-1}.
\end{equation}
Conversely, \eqref{g0} implies \eqref{g0gq}, and the sequence $g$ given by \eqref{fg.equation} and \eqref{g0gq} is obviously $q$-periodic if $f$ is.

For $m \in \{0,1,\dots, q-1\}$,
\begin{align*}
e^{i\kappa m} g(m) & = g(0) + \sum_{n=1}^{m} e^{i\kappa n} f(n)  = \frac 1{e^{i\kappa q}-1} \left( e^{i\kappa q} \sum_{n=1}^m e^{i\kappa n} f(n) + \sum_{n=m+1}^q e^{i\kappa n} f(n) \right)
\end{align*}
from which the estimate on $\lVert g \rVert_\infty$ is immediate.
\end{proof}
  
The following lemma is tailor-made to control terms such as those that appear in \eqref{P}.
    
\begin{lemma}\label{mainlemma}
Let $M\in \Z$ and $\phi\in \mathbb{R}$. Let $f(n)$ be $q$-periodic and let $\kappa = -2Mk-\phi \notin 2\pi\mathbb{Z}$. If the sequence $\varsigma$ has bounded variation and $\varsigma_n\to 0$, then
\begin{align}
&\left \vert\sum_{n=1}^N \left( f(n) e^{-i\phi n}  e^{-2iM(kn + \eta(n))} \varsigma_{n}
- (\Lambda_\kappa f)(n) e^{-i\phi n}  e^{-2iM(kn + \eta(n))}  \varsigma_{n}  (1 - e^{2iM(\eta(n) - \eta(n+1))}) \right) \right\vert\nonumber\\
\leq & 4 \lVert \Lambda_\kappa f \rVert_\infty \Var(\varsigma)\label{mainlemma.equation}
\end{align}
where $\Var(\varsigma)$ stands for the variation of the sequence $\varsigma$.
\end{lemma}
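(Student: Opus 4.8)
The plan is to prove \eqref{mainlemma.equation} by a single discrete integration by parts, using the operator $\Lambda_\kappa$ of Lemma~\ref{fglemma} as an ``antiderivative'' for the skew-periodic factor. Write $g = \Lambda_\kappa f$, which is well defined by Lemma~\ref{fglemma}. Since $\kappa = -2Mk-\phi$, one has $e^{-i\phi n} e^{-2iM(kn+\eta(n))} = e^{i\kappa n} e^{-2iM\eta(n)}$, so the defining relation \eqref{fg.equation} lets us rewrite the first term under the sum as
\[
f(n) e^{-i\phi n} e^{-2iM(kn+\eta(n))} \varsigma_n = \bigl( e^{i\kappa n} g(n) - e^{i\kappa(n-1)} g(n-1) \bigr) e^{-2iM\eta(n)} \varsigma_n .
\]
For the second term, the point is the telescoping identity $e^{-2iM\eta(n)}\bigl(1 - e^{2iM(\eta(n)-\eta(n+1))}\bigr) = e^{-2iM\eta(n)} - e^{-2iM\eta(n+1)}$, which turns it into $-e^{i\kappa n} g(n)\bigl( e^{-2iM\eta(n)} - e^{-2iM\eta(n+1)} \bigr) \varsigma_n$.

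Setting $D(n) = e^{i\kappa n} g(n) e^{-2iM\eta(n+1)}$, the two ``$e^{i\kappa n}g(n) e^{-2iM\eta(n)}$'' pieces cancel and the $n$-th summand collapses to $\varsigma_n\bigl( D(n) - D(n-1) \bigr)$. Summation by parts then yields
\[
\sum_{n=1}^N \varsigma_n \bigl( D(n) - D(n-1) \bigr) = \varsigma_N D(N) - \varsigma_1 D(0) - \sum_{n=1}^{N-1} (\varsigma_{n+1} - \varsigma_n) D(n) .
\]
Because $\kappa$ and $\eta(n)$ are real, $\lvert D(n) \rvert = \lvert g(n) \rvert \le \lVert \Lambda_\kappa f \rVert_\infty$ for all $n$. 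Since $\varsigma$ has bounded variation with $\varsigma_n \to 0$, we get $\lvert \varsigma_1 \rvert \le \Var(\varsigma)$ and $\lvert \varsigma_N \rvert \le \Var(\varsigma)$ (write each as a convergent tail of $\sum_n (\varsigma_{n+1}-\varsigma_n)$), while the last sum is bounded by $\lVert \Lambda_\kappa f \rVert_\infty \Var(\varsigma)$ in the obvious way. Altogether the right-hand side is at most $3 \lVert \Lambda_\kappa f \rVert_\infty \Var(\varsigma) \le 4 \lVert \Lambda_\kappa f \rVert_\infty \Var(\varsigma)$, which is \eqref{mainlemma.equation}.

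I do not expect a genuine obstacle here; the whole argument is one summation by parts. The only step requiring care is the bookkeeping in the first two paragraphs: tracking signs and checking that the two terms in \eqref{mainlemma.equation} really combine into $\varsigma_n\bigl(D(n)-D(n-1)\bigr)$, with the shift from $\eta(n)$ to $\eta(n+1)$ landing correctly --- which is exactly why the correction factor $\bigl(1-e^{2iM(\eta(n)-\eta(n+1))}\bigr)$ is built into the statement. After that identity is in place, the estimates are completely routine.
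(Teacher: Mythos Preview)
Your proof is correct and follows essentially the same route as the paper: write $g=\Lambda_\kappa f$, use \eqref{fg.equation} and the identity $e^{-2iM\eta(n)}(1-e^{2iM(\eta(n)-\eta(n+1))})=e^{-2iM\eta(n)}-e^{-2iM\eta(n+1)}$ to collapse the summand to $\varsigma_n(D(n)-D(n-1))$, then sum by parts. Your packaging via $D(n)$ is slightly cleaner than the paper's split into a telescoping piece and a variation piece, and even yields the sharper constant $3\lVert \Lambda_\kappa f\rVert_\infty \Var(\varsigma)$, but the argument is the same.
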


\begin{proof}
Denoting $g = \Lambda_\kappa f$ and using \eqref{fg.equation}, we can rewrite the sum in the left-hand side of \eqref{mainlemma.equation} as
\begin{align*}
& \sum_{n=1}^N \left(  ( g(n)e^{i\kappa n} - g(n-1)e^{i\kappa (n-1)} ) e^{-2iM\eta(n)} \varsigma_{n}
- g(n) e^{i\kappa n} \varsigma_{n}  (e^{-2iM\eta(n)}  - e^{ - 2 i M\eta(n+1)}) \right) \\
& = \sum_{n=1}^N \left( - g(n-1)e^{i\kappa (n-1)} e^{-2iM\eta(n)} \varsigma_{n}
+ g(n) e^{i\kappa n} \varsigma_{n}  e^{ - 2 i M\eta(n+1)} \right).
\end{align*}
This is the sum of two sums; the first is the telescoping sum,
\begin{align*}
& \left\lvert \sum_{n=1}^N \left( - g(n-1)e^{i\kappa (n-1)} e^{-2iM\eta(n)} \varsigma_{n}
+ g(n) e^{i\kappa n} \varsigma_{n+1}  e^{ - 2 i M\eta(n+1)} \right) \right\rvert \\
&  = \left\lvert g(N) e^{i\kappa N} \varsigma_{N+1}  e^{ - 2 i M\eta(N+1)} - g(0) \varsigma_{1}  e^{ - 2 i M\eta(1)} \right\rvert \\
& \le 2 \lVert g\rVert_\infty \lVert \varsigma\rVert_\infty
\end{align*}
and the second is a sum bounded by bounded variation,
\[
\left\lvert \sum_{n=1}^N 
g(n) e^{i\kappa n} (\varsigma_n - \varsigma_{n+1} ) e^{ - 2 i M\eta(n+1)} \right\rvert \le 2 \lVert g \rVert_\infty \sum_{n=1}^N \lvert \varsigma_n - \varsigma_{n+1} \rvert \le 2 \lVert g \rVert_\infty \Var(\varsigma).
\]
Since $\varsigma$ is decaying, $\lVert \varsigma\rVert_\infty \le \Var(\varsigma)$, so \eqref{mainlemma.equation} follows from the previous estimates.
\end{proof}

We now have the tools necessary to start an iterative procedure. To any term of the form
\begin{equation}\label{sampleterm}
f_{M,\lj}(n) c_{\lj} e^{-i \phi_\lj n} \varsigma_n^{(\lj)} e^{-2i M (k n+ \eta(n))} 
\end{equation}
we can apply the previous lemma, to replace it by
\[
(\Lambda_{-2Mk-\phi_\lj} f_{M,\lj})(n) c_{\lj} e^{-i \phi_\lj n} \varsigma_n^{(\lj)} e^{-2i M (k n+ \eta(n))}  (1 - e^{2iM(\eta(n) - \eta(n+1))})
\]
and then, using \eqref{eta2} to express $1 - e^{2iM(\eta(n) - \eta(n+1))}$, to get to another sum of terms of the form \eqref{sampleterm}, but with longer vectors $\lj$. This leads to a recursion relation for $f_{M,\lj}$. Denote
\begin{equation}\label{g.recursion}
g_{M,\lj} = \Lambda_{-2Mk-\phi_\lj} f_{M,\lj}.
\end{equation}
The recursion relation for $f$ is given by
\begin{equation}\label{f.recursion}
f_{M,l_1,\dots, l_J} = \xi_{M,l_1,\dots, l_J} + \sum_{j=1}^{J-1} \sum_{m=0}^{j} g_{m,l_1, \dots, l_{j}} \odot  \omega_{m,M-m,l_{j+1}, \dots, l_J},
\end{equation}
where $\odot$ stands for a product symmetrized over $l_1,\dots,l_J$,
\[
g_{m,l_1, \dots, l_{j}} \odot  \omega_{m,M-m,l_{j+1}, \dots, l_J} = \frac 1{J!} \sum_{\pi \in S_J} g_{m,l_{\pi(1)}, \dots, l_{\pi(j)} } \omega_{m,M-m,l_{\pi(j+1)}, \dots, l_{\pi(J)}}
\]

The proof of Theorem~\ref{maintheorem} is immediate from the following two lemmas.

\begin{lemma}\label{hausdorffdimlemma}
\begin{enumerate}[(a)]
\item Let $\alpha, \beta \in (0,1]$ and let $\nu$ be a finite U$\alpha$H measure on $\mathbb{R}$ such that $\supp \nu$ lies in the interior of a band. Let $M, J \in \mathbb{N}$ with $1\le M\le J$. If $\alpha > J\beta$, then
\begin{equation}\label{CMJ}
\int \lVert g_{M,l_1, \dots, l_J}(k) \rVert_\infty^\beta d\nu(k) \le C_{M,J,\beta,\nu}.
\end{equation}
If instead $\alpha >(J-1)\beta$, then
\begin{equation}\label{CMJ2}
\int \lVert f_{M,l_1, \dots, l_{J}}(k) \rVert_\infty^\beta d\nu(k) \le \tilde C_{M,J,\beta,\nu}.
\end{equation}

\item For any $M, J \in \mathbb{N}$ with $1\le M\le J$, the small divisor condition
\begin{equation}\label{smalldivisor}
\sum_{\lj \in \mathbb{N}^J} \lVert g_{M, \lj} \rVert_\infty \lvert c_\lj \rvert < \infty
\end{equation}
holds for all $k\notin S$, for some set $S$ with $\dim_H S \le J\beta$.
\end{enumerate}
\end{lemma}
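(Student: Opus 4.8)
The plan is to prove both parts of Lemma~\ref{hausdorffdimlemma} simultaneously by strong induction on $J$, exploiting the recursion relations \eqref{g.recursion} and \eqref{f.recursion}. For part (a), the base case $J=1$ is handled directly: $f_{M,l_1} = \xi_{M,l_1}$ has sup-norm bounded by a constant times $\Phi(E)$ (uniformly in $l_1$) by \eqref{xiomegabounds}, and $g_{M,l_1} = \Lambda_{-2Mk-\phi_{l_1}} f_{M,l_1}$, so by Lemma~\ref{fglemma} we have $\lVert g_{M,l_1}(k)\rVert_\infty \le q \lVert f_{M,l_1}(k)\rVert_\infty / \lvert e^{i(-2Mk-\phi_{l_1})q}-1\rvert$. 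Thus the only issue in bounding $\int \lVert g_{M,l_1}\rVert_\infty^\beta d\nu$ is the small-divisor factor $\lvert e^{-i(2Mk+\phi_{l_1})q}-1\rvert^{-\beta}$; one shows this is $\nu$-integrable whenever $\nu$ is U$\alpha$H with $\alpha > \beta$, using that $\lvert e^{i\theta}-1\rvert \gtrsim \operatorname{dist}(\theta, 2\pi\mathbb{Z})$ and a standard layer-cake estimate $\int \operatorname{dist}(2Mqk, 2\pi\mathbb{Z}+\phi_{l_1}q)^{-\beta} d\nu(k) = \int_0^\infty \nu(\{k : \operatorname{dist}(\cdots) < t^{-1/\beta}\}) dt$, where the U$\alpha$H bound gives $\nu$ of each such set $\lesssim t^{-\alpha/\beta}$, and $\alpha/\beta > 1$ makes the integral converge. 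The resulting bound is uniform in $l_1$ because $\Phi(E)$ is bounded on $\supp\nu$ by \eqref{Phiupperbound}.

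For the inductive step, suppose \eqref{CMJ} and \eqref{CMJ2} hold for all smaller values of $J$. From \eqref{f.recursion}, $f_{M,\lj}$ is a symmetrized sum of products $g_{m,l_{\pi(1)},\dots,l_{\pi(j)}} \cdot \omega_{m,M-m,l_{\pi(j+1)},\dots,l_{\pi(J)}}$ with $j \le J-1$; each such product has sup-norm bounded, pointwise in $k$, by $\lVert g_{m,\dots}(k)\rVert_\infty$ times a constant (using the uniform bound \eqref{xiomegabounds} on the $\omega$'s, which involves at most $J-j$ further factors of $\Phi(E)$, again bounded on $\supp\nu$). Since the hypothesis $\alpha > (J-1)\beta$ guarantees $\alpha > j\beta$ for each $j\le J-1$, the inductive bound \eqref{CMJ} applies to each $g_{m,l_{\pi(1)},\dots,l_{\pi(j)}}$, giving \eqref{CMJ2} for $f_{M,\lj}$ after using $(\sum |x_i|)^\beta \le \sum |x_i|^\beta$ for $\beta\le 1$ to split the (finite) symmetrized sum. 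Then \eqref{CMJ} for $g_{M,\lj}$ follows from \eqref{CMJ2} for $f_{M,\lj}$ exactly as in the base case: $g_{M,\lj} = \Lambda_{-2Mk-\phi_\lj} f_{M,\lj}$, Lemma~\ref{fglemma} introduces the small-divisor factor $\lvert e^{-i(2Mk+\phi_\lj)q}-1\rvert^{-\beta}$, and the condition $\alpha > J\beta > \beta$ plus the layer-cake argument handles its $\nu$-integrability; one must be slightly careful that the small-divisor set depends on $\phi_\lj$ but the $\nu$-measure bound is uniform over the choice of phase, so the final constant $C_{M,J,\beta,\nu}$ does not depend on $\lj$.

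For part (b), fix $M,J$ with $1\le M\le J$ and choose $\beta' \in (0,\beta)$ such that $\beta' < 1/(p-1)$ still, close enough to $\beta$; actually it is cleaner to argue as follows. For a fixed U$\alpha$H measure $\nu$ supported in a band interior with $\alpha > J\beta$ (such measures exist with $\alpha$ arbitrarily close to $J\beta$ only if we instead work band-by-band and use the Hausdorff-dimension characterization), part (a) gives $\int \sum_{\lj} \lVert g_{M,\lj}(k)\rVert_\infty^\beta \lvert c_\lj\rvert^\beta d\nu(k) = \sum_{\lj} \lvert c_\lj\rvert^\beta \int \lVert g_{M,\lj}(k)\rVert_\infty^\beta d\nu(k) \le C_{M,J,\beta,\nu} \sum_\lj \lvert c_\lj\rvert^\beta = C_{M,J,\beta,\nu} \big(\sum_l \lvert c_l\rvert^\beta\big)^J < \infty$ by assumption \eqref{decayofcoefficients} and Tonelli. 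Hence $\sum_\lj \lVert g_{M,\lj}(k)\rVert_\infty^\beta \lvert c_\lj\rvert^\beta < \infty$ for $\nu$-a.e.\ $k$, which by $\lvert\cdot\rvert^{1/\beta}$-monotonicity (or just $\ell^\beta \subset \ell^1$) gives \eqref{smalldivisor} $\nu$-a.e. To upgrade "$\nu$-a.e.\ for every such $\nu$" to "for all $k$ outside a set of Hausdorff dimension $\le J\beta$", one invokes the standard fact (e.g.\ from the theory relating U$\alpha$H measures to Hausdorff dimension, as in work of Last--Simon) that a set which is null for every U$\alpha$H measure for every $\alpha > J\beta$ has Hausdorff dimension at most $J\beta$; equivalently, one directly estimates the Hausdorff dimension of the exceptional set $S = \{k : \sum_\lj \lVert g_{M,\lj}(k)\rVert_\infty \lvert c_\lj\rvert = \infty\}$ using a Borel--Cantelli / covering argument built from the small-divisor sets $\{k : \operatorname{dist}(2Mqk,2\pi\mathbb{Z}+q\phi_\lj) < \delta\}$ directly.

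The main obstacle is the uniformity in the multi-index $\lj$: at every stage one needs the constants $C_{M,J,\beta,\nu}$ and $\tilde C_{M,J,\beta,\nu}$ to be independent of $\lj$, so that the coefficient sum $\sum_\lj \lvert c_\lj\rvert^\beta = (\sum_l \lvert c_l\rvert^\beta)^J$ can be factored out in part (b). This forces one to track that (i) the finitely-many distinct sequences $\xi_{M,\lj}$, $\omega_{m,m',\lj}$ have sup-norms bounded by $C\,\Phi(E)^{(\cdot)}$ uniformly (which is \eqref{xiomegabounds}, and uses that $\Phi$ is bounded on $\supp\nu$), and (ii) the small-divisor integrals $\int \lvert e^{-i(2Mk+\phi_\lj)q}-1\rvert^{-\beta}\,d\nu(k)$ are bounded uniformly over the phase $\phi_\lj$, which follows because the U$\alpha$H estimate on $\nu$ of a ball is translation-independent. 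Carrying the induction through while keeping every constant phase-independent is the delicate bookkeeping; the analytic heart — the layer-cake small-divisor estimate against a U$\alpha$H measure — is routine once set up.
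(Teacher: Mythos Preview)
Your overall strategy---induction on $J$ using the recursions \eqref{g.recursion}, \eqref{f.recursion}, together with the uniform bounds \eqref{xiomegabounds} and a small-divisor integral against a U$\alpha$H measure---is exactly the paper's approach, and your treatment of the step from lower-level $g$'s to \eqref{CMJ2} for $f_{M,\lj}$ is correct. Part (b) is also fine and mirrors the paper, which likewise defers to the argument of \cite[Lemma~4.2]{Lukic-infinite}.

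There is, however, a real gap in the inductive step from \eqref{CMJ2} to \eqref{CMJ} at levels $J\ge 2$. You write that this step goes ``exactly as in the base case'': integrate the small-divisor factor $\lvert e^{-i(2Mk+\phi_\lj)q}-1\rvert^{-\beta}$ by layer-cake, using only $\alpha>\beta$. But in the base case $J=1$ this works because $f_{M,l_1}=\xi_{M,l_1}$ is \emph{bounded} in $k$ on $\supp\nu$, so the product $\lvert\text{divisor}\rvert^{-\beta}\lVert f\rVert_\infty^\beta$ is controlled by the divisor alone. For $J\ge 2$, $f_{M,\lj}(k)$ contains the lower-level $g$'s and is \emph{not} bounded in $k$: it carries its own small divisors. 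You therefore must integrate a genuine product of two unbounded quantities, and neither ``$\int\lvert\text{divisor}\rvert^{-\beta}\,d\nu<\infty$'' nor ``$\int\lVert f\rVert_\infty^\beta\,d\nu<\infty$'' separately suffices. This is also why your argument never actually uses the full strength of the hypothesis $\alpha>J\beta$; you invoke it but only apply $\alpha>\beta$.

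The paper closes this gap with H\"older's inequality with exponents $J$ and $J/(J-1)$:
\[
\int \lVert g_{M,\lj}\rVert_\infty^\beta\,d\nu \le \Bigl(\int \lvert\text{divisor}\rvert^{-J\beta}\,d\nu\Bigr)^{1/J}\Bigl(\int \lVert f_{M,\lj}\rVert_\infty^{J\beta/(J-1)}\,d\nu\Bigr)^{(J-1)/J}.
\]
The first factor is finite precisely when $\alpha>J\beta$, and the second is \eqref{CMJ2} applied at level $J$ with the exponent $\beta'=J\beta/(J-1)$, which requires $\alpha>(J-1)\beta'=J\beta$ as well. This calibration is what forces the threshold to be $J\beta$ rather than $\beta$, and it is the one nontrivial analytic point you are missing.
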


\begin{lemma} \label{lemmafixedE} Assume the notation and assumptions of Therorem~\ref{maintheorem}. Assume further that for a given $E\in \R$ in the equation \eqref{u}, and for a sequence of frequencies $\{\phi_j\}_{j\in\mathbb N}$ , the following small divisor condition holds for any integers $m, j\in\mathbb{N}$ with $1\le m \le j \le p-1$: 
\[
\sum_{\lj \in \mathbb{N}^j}\rVert c_{\lj}g_{m,\lj} \rVert_\infty<\infty.
\]
Then solutions of \eqref{u} are bounded.
\end{lemma}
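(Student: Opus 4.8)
The plan is to turn the formula \eqref{P} for $\log(Z(n+1)/Z(n))$ into a telescoping series by repeatedly applying Lemma~\ref{mainlemma}, using the recursion \eqref{f.recursion}--\eqref{g.recursion} that was set up precisely so that this works. First I would check the hypothesis of Lemma~\ref{mainlemma} applies: the phases that occur are $\kappa = -2Mk - \phi_\lj$, and one needs $\kappa \notin 2\pi\mathbb Z$; since $E$ is in the interior of a band and $M \in \{-1,0,1,\dots\}$ is controlled, this is the small-divisor condition built into the hypothesis (the assumption that $\sum_{\lj} \lVert c_\lj g_{m,\lj}\rVert_\infty < \infty$ already presupposes the $g$'s, hence the $\Lambda_\kappa$'s, are well-defined, i.e. $\kappa q \notin 2\pi\mathbb Z$). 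The case $M=0$ is trivial because then the term in \eqref{P} has no $\eta$-dependence and is already a bounded-variation sequence times a pure oscillation, handled directly by Lemma~\ref{fglemma} plus summation by parts.

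Next I would set up the iteration. Start from \eqref{P}: it is a sum of terms of the form \eqref{sampleterm} with $J \le p-1$ (plus the $\ell^1$ remainder $Q(n)$, which contributes an absolutely convergent, hence harmless, tail). Applying Lemma~\ref{mainlemma} to each such term with $M \ne 0$ replaces
\[
f_{M,\lj}(n) c_\lj e^{-i\phi_\lj n} \varsigma_n^{(\lj)} e^{-2iM(kn+\eta(n))}
\]
by
\[
g_{M,\lj}(n) c_\lj e^{-i\phi_\lj n} \varsigma_n^{(\lj)} e^{-2iM(kn+\eta(n))} \bigl(1 - e^{2iM(\eta(n)-\eta(n+1))}\bigr),
\]
up to an error bounded by $4\lVert g_{M,\lj}\rVert_\infty \Var(\varsigma^{(\lj)})$; here $\varsigma^{(\lj)} = \varsigma^{(l_1)}\cdots\varsigma^{(l_J)}$, whose variation is controlled by condition (i) and the $\ell^p$ bound (ii) (a product of $J \le p-1$ uniformly-$\ell^p$, uniformly-bounded-variation sequences has variation bounded uniformly in $\lj$). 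Now expand $1 - e^{2iM(\eta(n)-\eta(n+1))}$ using \eqref{eta2}: each resulting term is again of the form \eqref{sampleterm} but with a longer index vector (length between $J+1$ and $2J$) and oscillation index $m' = M + m$ with $-M \le m \le J$. Crucially, because each new factor of $e^{-2i(kn+\eta(n))}$ drags along an extra $a'$ or $b'$ (this is the content of the remark after \eqref{eta2}), and because we only iterate while total length stays $\le p-1$, after finitely many steps every surviving term has length $\ge p$, hence lies in a (summable) remainder: the product $\varsigma^{(\lj)}$ of $\ge p$ uniformly-$\ell^p$ sequences is $\ell^1$, and its variation is summable too, so those terms contribute an $\ell^1$ sequence to $\log(Z(n+1)/Z(n))$.

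Assembling the bookkeeping: after running the iteration to depth $p-1$, I obtain
\[
\sum_{n=1}^N \log\frac{Z(n+1)}{Z(n)} = (\text{telescoping boundary terms}) + (\text{error terms}),
\]
where the telescoping part is $\sum_n (G(n+1) - G(n))$ for an explicit $G$ built from the $g_{M,\lj}$'s (bounded since $\sum_\lj \lVert c_\lj g_{M,\lj}\rVert_\infty < \infty$ by hypothesis, and the oscillations and $\eta$'s are unimodular), and the error is absolutely summable: its size is $\sum_{M,\lj} \lVert g_{M,\lj}\rVert_\infty \lvert c_\lj\rvert \Var(\varsigma^{(\lj)})$ at each iteration level plus the $\ell^1$ depth-$\ge p$ tail, all finite by (i), (ii) and the small-divisor hypothesis. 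Hence $\log Z(N+1) = \log Z(1) + \sum_{n=1}^N \log(Z(n+1)/Z(n))$ stays bounded as $N \to \infty$, so $R(n) = \lvert Z(n)\rvert$ is bounded and bounded below. By \eqref{rho1} this bounds $u(n)$, and since the argument applies to every solution $u$ (the recursion for $Z$ depends on $u$ only through $\eta$, all of whose oscillatory contributions we have absorbed uniformly), all solutions of \eqref{u} are bounded.

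The main obstacle is the combinatorial control of the iteration: one must verify that the recursion \eqref{f.recursion} really does close after $p-1$ steps with a genuinely summable remainder, i.e. that each application of \eqref{eta2} strictly increases the index length while keeping the structural form intact, and that the symmetrized products $\odot$ and the growth of the $\Lambda_\kappa$-norms (each step multiplies by a small-divisor factor $q/\lvert e^{i\kappa q}-1\rvert$) do not spoil the summability over $\lj$ — this is exactly where the hypothesis $\sum_\lj \lVert c_\lj g_{m,\lj}\rVert_\infty < \infty$ is used, and it is imported wholesale. The secondary technical point is the uniform estimate $\Var(\varsigma^{(l_1)}\cdots\varsigma^{(l_J)}) \le C_J$ independent of $\lj$, which follows from a product rule for variations together with conditions (i) and (ii); I would state this as a short sublemma rather than expand it inline.
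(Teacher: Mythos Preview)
Your iterative scheme via Lemma~\ref{mainlemma} and the recursion \eqref{f.recursion}--\eqref{g.recursion} is the right engine, and your bookkeeping for the $M\neq 0$ terms matches the paper's. The gap is in your treatment of the $M=0$ terms. You write that the case $M=0$ ``is trivial because then the term \dots\ is already a bounded-variation sequence times a pure oscillation, handled directly by Lemma~\ref{fglemma} plus summation by parts.'' But Lemma~\ref{fglemma} requires $\kappa q\notin 2\pi\mathbb{Z}$, and for $M=0$ one has $\kappa=-\phi_{\lj}$; nothing in the hypotheses forbids $\phi_{\lj}q\in 2\pi\mathbb{Z}$ (e.g.\ $\phi_l=0$ is allowed), and the small divisor assumption is stated only for $1\le m\le j$, not $m=0$. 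In that resonant case the term is a $q$-periodic sequence times a decaying bounded-variation sequence with no oscillation to integrate, and its partial sums can genuinely diverge (think $\varsigma_n=n^{-1/p}$). Worse, the iteration itself manufactures new $M=0$ terms: expanding $1-e^{2iM(\eta(n)-\eta(n+1))}$ via \eqref{eta2} produces indices $m$ down to $-M$, so $M+m=0$ occurs at every level. These terms cannot be fed back into Lemma~\ref{mainlemma}, so your scheme does not close, and the conclusion that $\log Z(N)$ stays bounded does not follow.

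The paper does not try to bound $\sum_n\log(Z(n+1)/Z(n))$ directly. Instead it isolates all the $M=0$ contributions as the sequences $S_{j,0}(n)$, observes that they are independent of the solution $u$ (since $\eta$ has dropped out), and sets $\mathcal A(n)=\sum_j S_{j,0}(n)$. What remains after subtracting $\mathcal A(n)$ is controlled exactly by your iteration, uniformly in $u$, and then Lemma~\ref{InitialConditionsTrick} converts ``$\sum_n(\log(Z(n+1)/Z(n))-\mathcal A(n))$ converges uniformly in $u$'' into ``$R_u(n)$ converges'' via a Wronskian argument comparing two solutions. This indirect step is precisely the ``new ingredient'' the paper flags in the introduction as replacing explicit cancellation calculations that would be prohibitively long here. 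Your proposal would be repaired by inserting this step: stop trying to sum the $M=0$ pieces, collect them into $\mathcal A(n)$, and invoke Lemma~\ref{InitialConditionsTrick}.
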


\begin{proof}[Proof of Lemma~\ref{hausdorffdimlemma}] 
(a) We first recall a basic fact. Since $\nu$ is a finite U$\alpha$H measure, for any $\phi \in \mathbb{R}$
\begin{equation}\label{CMJ3}
\int \frac 1{\lvert k - \phi \rvert^\beta} d\nu(k) \le D_\beta
\end{equation}
where $D_\beta$ is a finite constant independent of $\phi$; see, e.g.,\cite[Lemma 4.1]{Lukic-infinite}.

Note that, since $\supp \nu$ lies in the interior of a band,
\[
\sup_{k \in \supp \nu} \frac 1{\lvert e^{ikq} - 1\rvert} < \infty, \qquad  \sup_{k \in \supp \nu} \Phi(E(k)) < \infty.
\]
Now we can prove \eqref{CMJ} and \eqref{CMJ2} by induction. The induction is fueled by two inequalities which follow from \eqref{g.recursion} and \eqref{f.recursion},
\begin{equation}\label{g.inequality}
\lVert g_{M,\lj} \rVert_\infty \le \frac q{\lvert e^{ikq} - 1 \rvert} \lVert f_{M,\lj}\rVert_\infty
\end{equation}
and
\begin{equation}\label{f.inequality}
\lVert f_{M,l_1,\dots, l_J} \rVert_\infty^\alpha \le \lVert\xi_{M,l_1,\dots, l_J}\rVert_\infty^\alpha +  \sum_{j=1}^{J-1} \sum_{m=0}^{j}  \lVert g_{m,l_1, \dots, l_j} \rVert_\infty^\alpha \odot \lVert \omega_{m,M-m,l_{j+1}, \dots, l_J}\rVert_\infty^\alpha.
\end{equation}

Assume that \eqref{CMJ} and \eqref{CMJ2} hold for values smaller than $J$. Since the norms of $\xi$'s and $\omega$'s are uniformly bounded and $j\beta < \alpha$ for all $j\le J-1$, integrating \eqref{f.inequality} by $d\nu(k)$ we obtain \eqref{CMJ2}. Using H\"older's inequality, \eqref{g.inequality} implies
\[
\int \lVert g_{M,\lj} \rVert_\infty^\beta d\nu(k) \le \left( \int \left\lvert \frac q{e^{ikq-1}}\right\rvert^{J\beta} d\nu(k) \right)^{1/J} \left( \int \lVert f_{M,\lj} \rVert_\infty^{J\beta/(J-1)} d\nu(k) \right)^{(J-1)/J}.
\]
Both integrals on the right-hand side are bounded by \eqref{CMJ2} and \eqref{CMJ3}, which completes the inductive step.

(b) For any compact set $K$ that lies in the interior of a band, $\sup_K \Phi < \infty$. Thus, \eqref{smalldivisor} holds on $K$ everywhere except on a set of Hausdorff dimension at most $J\beta$, by the same argument as in the proof of Lemma 4.2 in \cite{Lukic-infinite}. Since the spectrum of $J$ can be covered (up to finitely many points) by countably many such compact sets $K$, the claim follows.
\end{proof}

Before we can prove Lemma \ref{lemmafixedE}, we need some preparatory work. We denote
 \begin{equation}
 \label{sigma}
 \sigma=  \sup_j \lVert \varsigma_j \rVert_p,
\end{equation} 
 which is finite by the assumptions of Theorem \ref{maintheorem}. We write
\begin{equation}\label{SJMn}
\mathcal S_{J,M}(n)=\sum_{\lj \in \mathbb{N}^J} f_{M,\lj}(n) c_{\lj} e^{-i \phi_\lj n} \varsigma_n^{(\lj)} e^{-2i M (k n+ \eta(n))} 
\end{equation}
where based on (\ref{ml50}) we see that $P(n)$ is a finite sum of terms that are a product of a $q$-periodic function and some $S_{ J,1}$. Thus it suffices for our purposes to show that $\sum_n S_{J,1}(n)$ converges for any value of $J$.

Our goal is to replace $S_{\tilde J,1}$ by replacing $J=\tilde J$ with higher values of $J$. Let us define 

\begin{equation}\label{Edefn}E_{J,M}=\sum_{l_1,\ldots,l_J=1}^\infty \vert c_{l_1}\ldots c_{l_J} g_{M,l_1,\dots,l_J}\vert,
\end{equation}
which is finite by the assumptions of Lemma \ref{lemmafixedE}, and let us note that
\[
\sum_{l_1,\ldots,l_J=1}^\infty \vert c_{l_1}\ldots c_{l_J} \vert  = \lVert c \rVert_1^J < \infty.
\]

\begin{lemma}

For any $0\leq M \leq J \leq p$, there is a finite constant $C_{J,M}$ such that
\begin{align}
\sum_{\lj \in \mathbb{N}^J}  \lVert f_{M,\lj}(n) c_{\lj} e^{-i \phi_\lj n} \varsigma_n^{(\lj)} e^{-2i M (k n+ \eta(n))} \rVert_\infty \le C_{J,M} \tau^J.
\label{J=2.bound}
\end{align}
In particular, the sum \eqref{SJMn} that defines $S_{J,M}(n)$ is absolutely convergent. If $J=p$, it also holds that
\begin{equation}\label{J=p.bound}
\sum_n \sum_{\lj \in \mathbb{N}^J}  \left\lvert f_{M,\lj}(n) c_{\lj} e^{-i \phi_\lj n} \varsigma_n^{(\lj)} e^{-2i M (k n+ \eta(n))}  \right\rvert \le C_{J,M} \sigma^p,
\end{equation}
so $S_{J,M}(n)$ is then also absolutely summable in $n$,
\[
\sum_n \left\lvert S_{J,M}(n) \right\rvert \le C_{J,M} \sigma^p.
\]
\end{lemma}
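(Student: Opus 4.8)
The plan is to prove both bounds by induction on $J$, using the recursion relations \eqref{g.recursion}, \eqref{f.recursion} together with the uniform bounds \eqref{xiomegabounds} and the two summability facts $\sum_{\lj\in\mathbb{N}^J}|c_\lj| = \lVert c\rVert_1^J<\infty$ (consequence of \eqref{decayofcoefficients}, since $\beta\le 1$ implies $\ell^\beta\subset\ell^1$) and $\sup_l\sum_n|\varsigma_{n+1}^{(l)}-\varsigma_n^{(l)}|=\tau<\infty$, $\sup_l\sum_n|\varsigma_n^{(l)}|^p<\infty$. The base case $J=1$: here $f_{M,\lj}=\xi_{M,\lj}$ (the sum in \eqref{f.recursion} is empty), which has $\ell^\infty$ norm bounded by $C\Phi(E)$ uniformly in $\lj$, and $g_{M,\lj}=\Lambda_{-2Mk-\phi_\lj}f_{M,\lj}$ has norm bounded by $\frac{q}{|e^{ikq}-1|}\lVert f_{M,\lj}\rVert_\infty$ via Lemma~\ref{fglemma}. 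Then $\sum_{\lj}\lVert f_{M,\lj}(n)c_\lj e^{-i\phi_\lj n}\varsigma_n^{(\lj)}e^{-2iM(kn+\eta(n))}\rVert_\infty \le C\Phi(E)\sum_l |c_l|\sup_l\lVert\varsigma^{(l)}\rVert_\infty \le C\Phi(E)\lVert c\rVert_1\tau$, using $\lVert\varsigma^{(l)}\rVert_\infty\le\Var(\varsigma^{(l)})\le\tau$ since $\varsigma^{(l)}$ is decaying.

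For the inductive step for \eqref{J=2.bound}: assume the bound holds for all $j<J$. From \eqref{f.recursion}, $\lVert f_{M,\lj}\rVert_\infty$ is bounded by $\lVert\xi_{M,\lj}\rVert_\infty$ plus a finite sum of products $\lVert g_{m,l_1,\dots,l_j}\rVert_\infty\,\lVert\omega_{m,M-m,l_{j+1},\dots,l_J}\rVert_\infty$ (the symmetrization $\odot$ only averages over permutations and does not worsen the sup-norm estimate). The factor $\lVert\omega\rVert_\infty$ is uniformly bounded by \eqref{xiomegabounds}, and $\lVert g_{m,l_1,\dots,l_j}\rVert_\infty\le \frac{q}{|e^{ikq}-1|}\lVert f_{m,l_1,\dots,l_j}\rVert_\infty$. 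Multiplying by $|c_\lj|$, splitting $c_\lj = c_{l_1}\cdots c_{l_j}\cdot c_{l_{j+1}}\cdots c_{l_J}$ and $\varsigma_n^{(\lj)}$ likewise, and summing over $\lj\in\mathbb{N}^J$, the sum factorizes: the $(l_1,\dots,l_j)$-part is controlled by the inductive hypothesis \eqref{J=2.bound} for index $j$ (after absorbing the $\frac{q}{|e^{ikq}-1|}$ factor, which is finite on the interior of a band), giving a $\tau^j$; the remaining $(l_{j+1},\dots,l_J)$-part gives $\lVert c\rVert_1^{J-j}\tau^{J-j}$ by the argument from the base case. The product is $C_{J,M}\tau^J$. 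Summing the finitely many terms in \eqref{f.recursion} preserves this. Absolute convergence of \eqref{SJMn} is then immediate.

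For \eqref{J=p.bound}, when $J=p$, every term in \eqref{SJMn} carries exactly $p$ factors $\varsigma_n^{(l_1)}\cdots\varsigma_n^{(l_p)}$. We estimate $\sum_n\sum_{\lj\in\mathbb{N}^p}\lVert f_{M,\lj}(n)\rVert_\infty |c_\lj|\,|\varsigma_n^{(l_1)}\cdots\varsigma_n^{(l_p)}|$. Using $\sup_n\lVert f_{M,\lj}(n)\rVert_\infty\le C$ (from a sup-norm version of the $J=p$ case of \eqref{f.recursion}, again via \eqref{xiomegabounds} and the now-established finiteness of all relevant quantities), it suffices to bound $\sum_n\sum_{\lj}|c_\lj|\prod_{r=1}^p|\varsigma_n^{(l_r)}|$. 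By the generalized Hölder inequality (AM–GM or Hölder with $p$ exponents all equal to $p$), $\prod_{r=1}^p|\varsigma_n^{(l_r)}|\le\frac1p\sum_{r=1}^p|\varsigma_n^{(l_r)}|^p$, so $\sum_n\prod_r|\varsigma_n^{(l_r)}|\le\frac1p\sum_{r=1}^p\sum_n|\varsigma_n^{(l_r)}|^p\le\sigma^p$ uniformly in $\lj$. Then $\sum_n\sum_{\lj}|c_\lj|\prod_r|\varsigma_n^{(l_r)}|\le\sigma^p\sum_{\lj}|c_\lj| = \sigma^p\lVert c\rVert_1^p<\infty$, giving $C_{J,M}\sigma^p$, and hence absolute summability of $S_{p,M}(n)$ in $n$ with the claimed bound. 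The main obstacle is bookkeeping: verifying that the factorization of the $\lj$-sum through the recursion \eqref{f.recursion} is legitimate despite the symmetrization $\odot$, and that all the geometric-series and small-divisor prefactors $\frac{q}{|e^{ikq}-1|}$, $\Phi(E)$ are finite on the interior of a band — but these are exactly the facts recorded in \eqref{Phiupperbound}, \eqref{xiomegabounds} and the remarks preceding this lemma, so no new difficulty arises beyond careful organization of the induction.
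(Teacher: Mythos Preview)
Your inductive step contains a genuine error in the application of Lemma~\ref{fglemma}. You write
\[
\lVert g_{m,l_1,\dots,l_j}\rVert_\infty \le \frac{q}{\lvert e^{ikq}-1\rvert}\,\lVert f_{m,l_1,\dots,l_j}\rVert_\infty,
\]
but the bound in Lemma~\ref{fglemma} is $\lVert\Lambda_\kappa f\rVert_\infty \le \frac{q\lVert f\rVert_\infty}{\lvert e^{i\kappa q}-1\rvert}$ with $\kappa = -2mk-\phi_{\lj}$, so the denominator is $\lvert e^{-i(2mk+\phi_{\lj})q}-1\rvert$, not $\lvert e^{ikq}-1\rvert$. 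This denominator depends on $\lj$ through $\phi_{\lj}$ and is \emph{not} uniformly bounded away from zero over $\lj\in\mathbb{N}^j$ --- this is precisely the small-divisor phenomenon. Consequently, you cannot ``absorb'' this factor as a constant and invoke the inductive hypothesis on $f$.

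The paper's proof avoids this trap by never passing from $g$ back to $f$. Instead, it uses the quantities $E_{j,m}=\sum_{\lj}\lvert c_{\lj}\rvert\,\lVert g_{m,\lj}\rVert_\infty$ directly: these are finite by the small-divisor hypothesis assumed in Lemma~\ref{lemmafixedE} (the present lemma is stated and proved in that context, after the definition \eqref{Edefn}). From \eqref{f.inequality} one multiplies by $\lvert c_{\lj}\rvert$ and sums; the term coming from $g_{m,l_1,\dots,l_j}\odot\omega$ then yields $E_{j,m}\cdot C\Phi(E)^{J-j}\lVert c\rVert_1^{J-j}$, giving \eqref{f.recursion.odot.abs2} with no appeal to a uniform bound on $g$. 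Your argument for the $J=p$ case via H\"older is the same as the paper's, but it rests on the bound $\sum_{\lj}\lvert c_{\lj}\rvert\,\lVert f_{M,\lj}\rVert_\infty<\infty$, whose proof you have not supplied correctly. To repair your argument, replace the use of Lemma~\ref{fglemma} in the inductive step by a direct invocation of the hypothesis that $E_{j,m}<\infty$.
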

\begin{proof}
For $J\geq 2$, we use \eqref{f.inequality} to show
\begin{equation} \label{f.recursion.odot.abs}
\lVert f_{M,l_1,\dots, l_J} \rVert\leq \lVert\xi_{M,l_1,\dots, l_J}\rVert+
\sum_{j=1}^{J-1} \sum_{m=0}^{j}  \lVert g_{m,l_1, \dots, l_j} \rVert \odot \lVert \omega_{m,M-m,l_{j+1}, \dots, l_J}\rVert.
\end{equation}
Multiplying by $c_\lj$ and summing in $\lj \in \mathbb{N}^J$, we get
\begin{equation} \label{f.recursion.odot.abs2}
\sum_{l\in \mathbb{N}^J} \lVert f_{M,\lj} c_\lj \rVert \leq  C_{J,M},
\end{equation}
where, using \eqref{xiomegabounds} we get
\[C_{J,M} = \left( C \Phi(E)^J  \lVert c\rVert_1^J + \sum_{j=1}^{J-1} \sum_{m=0}^j  C \Phi(E)^{J-j} \lVert c \rVert_1^{J-j} E_{j,m} \right).
\]
Multiplying \eqref{f.recursion.odot.abs2} by $\lVert \varsigma_\lj \rVert_\infty \le \tau^J$, we obtain \eqref{J=2.bound}. If $J=p$, we note that by \eqref{sigma} and H\"older's inequality,
\[
\sum_n \lvert \varsigma_\lj(n) \rvert \le \sigma^p,
\]
so multiplying \eqref{f.recursion.odot.abs2} by this we conclude \eqref{J=p.bound}.
\end{proof}

Let us introduce, for integer $1\leq t\leq J-1$,
\begin{equation}\label{tildef.recursion}
\tilde f_{M,l_1,\dots, l_J}^{(t)} =  \sum_{m=0}^{t} g_{m,l_1, \dots, l_{t}} \odot  \omega_{m,M-m,l_{t+1}, \dots, l_J}.
\end{equation}

Using \eqref{f.recursion}, it is clear that 
\begin{equation}\label{fsum}
 f_{M,l_1,\dots, l_J}=\xi_{M,l_1,\dots, l_J} +\sum_{t=1}^{J-1}  \tilde f_{M,l_1,\dots, l_J}^{(t)}.
\end{equation}
Let us define also
\begin{equation}\label{tildeSJMn}
\tilde S_{J,M}^{(t)}=\sum_{\lj \in \mathbb{N}^J} \tilde f^{(t)}_{M,\lj}(n) c_{\lj} e^{-i \phi_\lj n} \varsigma_n^{(\lj)} e^{-2i M (k n+ \eta(n))}, 
\end{equation}
and additionally
\begin{equation}\label{Tdefn}
T_{J,M}(n)=\sum_{\lj \in \mathbb{N}^J} \xi_{M,l_1,\dots, l_J} c_{\lj} e^{-i \phi_\lj n} \varsigma_n^{(\lj)} e^{-2i M (k n+ \eta(n))}.
\end{equation}
It is clear from \eqref{SJMn} that 
\begin{equation}\label{Ssum}
 S_{J,M}=T_{J,M}+\sum_{t=1}^{J-1}\tilde S_{J,M}^{(t)}.
\end{equation}

\begin{lemma}\label{SJK.lemma}
For $t=1,\ldots p-1$, there is a constant $C$ so that
\[
\left\vert \sum_n \left( \sum_{M=1}^{t} S_{t,M}(n)-\sum_{J=t+1}^{p-1}\sum_{M=0}^{J} \tilde S^{(t)}_{J,M}(n)\right) \right\vert\leq C\sum_{M=1}^J  E_{J,M}\tau^t.
\]
\end{lemma}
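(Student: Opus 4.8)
The plan is to carry out one step of the iteration: apply Lemma~\ref{mainlemma} term by term to each $S_{t,M}$, $M=1,\dots,t$, and then re-expand the factors $1-e^{2iM(\eta(n)-\eta(n+1))}$ that it produces by means of \eqref{eta2}, collecting the result into the higher-order quantities $\tilde S^{(t)}_{J,M}$. First, for fixed $M\in\{1,\dots,t\}$ and $\lj\in\mathbb{N}^t$ I would apply Lemma~\ref{mainlemma} with $f=f_{M,\lj}$ (which is $q$-periodic), $\phi=\phi_\lj$, $\kappa=-2Mk-\phi_\lj$, and the decaying bounded-variation sequence $c_\lj\varsigma^{(\lj)}$, using $\Var(\varsigma^{(\lj)})\le t\tau^t$ and $g_{M,\lj}=\Lambda_\kappa f_{M,\lj}$ from \eqref{g.recursion}. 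Summing the resulting estimates over $\lj\in\mathbb{N}^t$ --- legitimate because \eqref{J=2.bound} gives absolute convergence of the $\lj$-sum uniformly in $n$, and because the telescoping identity in the proof of Lemma~\ref{mainlemma}, together with $\varsigma_n^{(\lj)}\to 0$, shows the relevant partial sums in $n$ genuinely converge --- gives, with $R_{t,M}(n):=\sum_{\lj\in\mathbb{N}^t}g_{M,\lj}(n)c_\lj e^{-i\phi_\lj n}\varsigma_n^{(\lj)}e^{-2iM(kn+\eta(n))}\bigl(1-e^{2iM(\eta(n)-\eta(n+1))}\bigr)$, the estimate
\[
\left\lvert \sum_n\bigl( S_{t,M}(n)-R_{t,M}(n)\bigr)\right\rvert\le 4t\tau^t E_{t,M}.
\]

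Second, I would substitute \eqref{eta2} for $1-e^{2iM(\eta(n)-\eta(n+1))}$ inside $R_{t,M}(n)$. The $Q_M(n)$ contribution is bounded in $\ell^1(\mathbb{N})$ by $E_{t,M}\tau^t\lVert Q_M\rVert_{\ell^1}$ and may be dropped into the error. For the remaining ``main'' part, I would concatenate the index $\lj\in\mathbb{N}^t$ coming from $g_{M,\lj}$ with the index $\mathbf j\in\mathbb{N}^{J'}$ coming from $\omega_{M,m',\mathbf j}$ into one vector $\lj''\in\mathbb{N}^{t+J'}$, set $J=t+J'$ and $\hat M=M+m'$, use additivity of $\phi$ and multiplicativity of $c$ and of $\varsigma_n^{(\cdot)}$, and symmetrize the coefficient over $S_J$ (permissible since $c_{\lj''}$, $\varsigma_n^{(\lj'')}$, $\phi_{\lj''}$ are symmetric in $\lj''$). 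After also summing over $M\in\{1,\dots,t\}$, and discarding the $m=0$ summand in \eqref{tildef.recursion} --- which vanishes because $\omega_{0,\cdot,\cdot}\equiv 0$, the $M=0$ instance of \eqref{eta2} being trivial --- the coefficient of $c_{\lj''}e^{-i\phi_{\lj''}n}\varsigma_n^{(\lj'')}e^{-2i\hat M(kn+\eta(n))}$ is precisely $\tilde f^{(t)}_{\hat M,\lj''}(n)$, so $\sum_{M=1}^t R_{t,M}(n)=\sum_{J=t+1}^{t+p-1}\sum_{\hat M=0}^{J}\tilde S^{(t)}_{J,\hat M}(n)$ modulo the harmless $Q_M$ remainders.

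Third, the range $J=p,\dots,t+p-1$ lies outside the sum in the statement and must be absorbed into the error. For $\lj\in\mathbb{N}^J$ with $J\ge p$, H\"older's inequality (using $p$ of the factors at exponent $p$, the rest at exponent $\infty$) gives $\sum_n\lvert\varsigma_n^{(\lj)}\rvert\le\sigma^p\tau^{J-p}$, while $\sum_{\lj\in\mathbb{N}^J}\lvert c_\lj\rvert\lVert\tilde f^{(t)}_{M,\lj}\rVert_\infty<\infty$ follows from \eqref{tildef.recursion}, \eqref{xiomegabounds}, and finiteness of the $E_{j,m}$, exactly as in the preceding lemma; hence each of these $\tilde S^{(t)}_{J,M}$ is absolutely summable in $n$ with an explicit bound. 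Adding the three error sources --- the $4t\tau^t E_{t,M}$ from Lemma~\ref{mainlemma}, the $Q_M$ pieces, and the $J\ge p$ tail --- and bounding each by a constant (depending only on $p,q,\tau,\sigma,\tilde\Phi(I),\lVert c\rVert_1$) times a finite sum of the $E_{j,m}$, gives the asserted estimate. I expect the bookkeeping of the middle step to be the main obstacle: confirming that the reindexing produces each $\tilde S^{(t)}_{J,\hat M}$ with the correct coefficient and range (that $\hat M$ runs from $0$ to $J$, that the cut-off $\hat M-m\le J-t$ built into the nonzero $\omega$'s matches the summation range in \eqref{tildef.recursion}, and that the $m=0$ term drops out), and checking that the symmetrization is applied consistently, so the reassembled coefficient is literally $\tilde f^{(t)}_{\hat M,\lj''}$ and not merely a permutation-average that happens to agree with it.
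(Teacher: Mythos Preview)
Your proposal is correct and follows essentially the same approach as the paper's (very terse) proof: apply Lemma~\ref{mainlemma} with $f=f_{M,l_1,\dots,l_t}$, multiply by $c_{l_1}\cdots c_{l_t}$, sum over the indices, and then invoke \eqref{g.recursion}, \eqref{eta2}, \eqref{tildef.recursion}, and \eqref{Edefn} to recognise the output as the $\tilde S^{(t)}_{J,M}$ terms. You have spelled out in detail several points the paper leaves implicit---the $\Var(\varsigma^{(\lj)})\le t\tau^t$ bound, the separate handling of the $Q_M$ remainders from \eqref{eta2}, the disposal of the $J\ge p$ tail via H\"older, and the observation that the $m=0$ summand in \eqref{tildef.recursion} is vacuous since $\omega_{0,\cdot,\cdot}\equiv 0$---but these are exactly the ingredients the paper's one-line proof is pointing at.
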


\begin{proof}
We take \eqref{mainlemma.equation} with $f=f_{M,l_1,\ldots, l_t}$, multiply by $c_{l_1}\ldots c_{l_t}$ and then sum in $l_1\ldots l_t$. The lemma then follows by \eqref{g.recursion},\eqref{eta2}, \eqref{f.recursion}, \eqref{tildef.recursion} and \eqref{Edefn}.
\end{proof}

\begin{proof}[Proof of Lemma \ref{lemmafixedE}]

We start with the expression \eqref{P}. We can rewrite it as

\[S_{1,1}(n)+\sum_{J=2}^{p-1}\sum_{M=0}^1 T_{J,M}(n)\]

We apply Lemma \ref{SJK.lemma} to the $S_{1,1}$ term. Using \eqref{Ssum}, summing in $t=1,\ldots, p-1$ gets us 

\[S_{1,1}+\sum_{J=2}^{p-1}\sum_{M=0}^1 T_{J,M}\sim S_{2,0}+S_{2,1}+\sum_{J=3}^{p-1}\sum_{M=0}^1 T_{J,M}+\sum_{J=3}^{p-1}\sum_{M=0}^1 \tilde S^{(1)}_{J,M}.\]

By repeatedly applying Lemma \ref{SJK.lemma} to the $S$- terms, we eventually obtain

\[\left\vert \sum_n \left( S_{ 1,1}(n)+\sum_{J=2}^{p-1}\sum_{M=0}^1 T_{J,M}(n)-\sum_{M=1}^q S_{q,M}(n)-\sum_{j=2}^q S_{j,0}(n)\right)\right\vert\leq C\sum_{j=1}^{q-1}\sum_{m=1}^jE_{j,m}\tau^j.\]

We then apply (\ref{J=p.bound}) and use the triangle inequality to get

\[
\left \lvert\sum_{M=1}^q \sum_n S_{q,M}(n)\right\rvert
\leq \sum_{m=0}^{q-1}E_{q-1,m}\sum_{l=1}^\infty \vert c_l\vert\sigma^p+\sum_n \mathcal E\lVert \Phi(n)\rVert
\]
Therefore,
\[
\left\lvert \sum_n \left( \log \frac{Z(n+1)}{Z(n)} - \sum_{j=1}^p S_{j,0}(n) \right) \right\rvert \le \sum_{m=0}^{q-1}E_{q-1,m}\sum_{l=1}^\infty \vert c_l\vert\sigma^p+\sum_n \mathcal E\lVert \Phi(n)\rVert
\]
The right-hand side is finite; furthermore, the $\sum_{j=2}^q S_{j,0}(n)$ term is independent of $u$, so by  Lemma \ref{InitialConditionsTrick}, the sequence $R(n)$ is bounded, which concludes the proof.
\end{proof}

The proof of Theorem \ref{maintheoremCMV} is almost identical to the proof of Theorem \ref{maintheorem}.

\begin{proof}[Proof of Theorem \ref{maintheoremCMV}]
We consider \eqref{recursionCMV}, and using algebraic manipulations such as \eqref{rho'toalpha'}
we can write $\log\left(Z(n+1)/Z(n)\right)$ as a series in $\alpha'_n$, $\bar\alpha_n'$. We then wish to use \eqref{alpha'CMV} to write this in the form \eqref{P}; however, notice that $\bar\alpha_n'$ appears, so complex conjugates of $\varsigma_j$ would appear as well if we use \eqref{alpha'CMV}. To get exactly the form \eqref{P}, we therefore change the notation from \eqref{alpha'CMV} to
\[
\alpha'(n) =  \sum_{l=1}^\infty c_{2l-1} e^{-in\phi_{2l-1}} \varsigma^{(2l-1)}_{n}, \qquad \bar\alpha'(n) =  \sum_{l=1}^\infty c_{2l} e^{-in\phi_{2l}} \varsigma^{(2l)}_{n},
\]
and use this to obtain the form \eqref{P}; obviously, the new $c_l$ and $\varsigma^{(l)}$ obey the same conditions as before. We can then repeat the proof of Theorem \ref{maintheorem}, using Lemma \ref{InitialConditionsTrickCMV} in the final stage.

As in the proof of Theorem~\ref{maintheorem}, we thus conclude that there is a set $S$ with $\dim_H \le (p-1)\beta$ such that for $z \in \sigma_\ess(\mathcal{C})\setminus S$, there are no subordinate solutions of \eqref{OPUC.perturbed} and any solution $u$ of \eqref{OPUC.perturbed}, \eqref{u0} is bounded. This implies the conclusions of the theorem: boundedness of eigensolutions follows from Remark~\ref{GZremark}, and \eqref{acC} follows from Theorem 10.9.1 of \cite{SimonOPUC2}.
\end{proof}

\end{section}

\bibliographystyle{alpha}   
\bibliography{mybib}
\end{document}